\documentclass[11pt]{article}
\usepackage{amsmath,amssymb}
\usepackage{graphicx}
\setlength{\textwidth}{6.5in}
\setlength{\textheight}{8.75in}
\setlength{\headheight}{0cm}
\setlength{\topmargin}{-1cm}
\setlength{\oddsidemargin}{0.in}
\setlength{\evensidemargin}{0.in}
\setlength{\parskip}{1.0mm}
\usepackage{amsfonts}

\usepackage{amsthm}
\newtheorem{thm}{Theorem}[section]

\newtheorem{lemma}[thm]{Lemma}

\usepackage[USenglish]{babel} 
\usepackage[T1]{fontenc}
\usepackage[ansinew]{inputenc}
\usepackage{mathrsfs}
\usepackage{stmaryrd} 
\usepackage{epstopdf} 
\usepackage{verbatim}
\usepackage{MnSymbol}  
\usepackage{accents}   
\usepackage{bbm}			 
\usepackage{amsbsy}		 
\usepackage{color}     
\usepackage{authblk}   

\numberwithin{equation}{section}
\numberwithin{figure}{section}

\title{Stochastic variational principles for the collisional Vlasov-Maxwell and Vlasov-Poisson equations}
\date{}
\setlength{\affilsep}{2em}

\author[1,2]{Tomasz M. Tyranowski\thanks{\texttt{tomasz.tyranowski@ipp.mpg.de}}}

\affil[1]{\small Max-Planck-Institut f\"ur Plasmaphysik \authorcr Boltzmannstra{\ss}e 2, 85748 Garching, Germany}
\affil[2]{\small Technische Universit\"{a}t M\"{u}nchen, Zentrum Mathematik \authorcr Boltzmannstra{\ss}e 3, 85748 Garching, Germany}

\begin{document}

\maketitle

\begin{abstract}
In this work we recast the collisional Vlasov-Maxwell and Vlasov-Poisson equations as systems of coupled stochastic and partial differential equations, and we derive stochastic variational principles which underlie such reformulations. We also propose a stochastic particle method for the collisional Vlasov-Maxwell equations and provide a variational characterization of it, which can be used as a basis for a further development of stochastic structure-preserving particle-in-cell integrators.
\end{abstract}

\section{Introduction}
\label{sec:intro}

The collisional Vlasov equation

\begin{equation}
\label{eq: Collisional Vlasov equation}
\frac{\partial f}{\partial t} + \mathbf{v}\cdot\nabla_x f + \frac{q}{m}(\mathbf{E}+\mathbf{v} \times \mathbf{B})\cdot \nabla_v f = C[f]
\end{equation}

\noindent
describes the time evolution of the particle density function $f=f(\mathbf{x}, \mathbf{v}, t)$ of plasma consisting of charged particles of charge $q$ and mass $m$ which undergo collisions described by the collision operator $C[f]$, and are subject to the electric $\mathbf{E}=\mathbf{E}(\mathbf{x},t)$ and magnetic $\mathbf{B}=\mathbf{B}(\mathbf{x},t)$ fields. The vectors $\mathbf{x}=(x^1,x^2,x^3)$ and $\mathbf{v}=(v^1,v^2,v^3)$ denote positions and velocities, respectively. For simplicity, we restrict ourselves to one-spiece plasmas. Usually, the particle density function is normalized, so that the total number of particles is $N_{tot}=\iint f(\mathbf{x},\mathbf{v},t)d^3\mathbf{v}d^3\mathbf{x}$. However, in this work we would like to treat $f$ as a probability density function, and therefore we will use the normalization $\iint f(\mathbf{x},\mathbf{v},t)d^3\mathbf{v}d^3\mathbf{x}=1$ instead. A self-consistent model of plasma is obtained by coupling \eqref{eq: Collisional Vlasov equation} with the Maxwell equations

\begin{subequations}
\label{eq: Maxwell equations}
\begin{align}
\label{eq: Maxwell equations 1}
\nabla_x \cdot \mathbf{E}&=\rho, \\
\label{eq: Maxwell equations 2}
\nabla_x \cdot \mathbf{B}&=0, \\
\label{eq: Maxwell equations 3}
\nabla_x \times \mathbf{E}&= -\frac{\partial \mathbf{B}}{\partial t}, \\
\label{eq: Maxwell equations 4}
\nabla_x \times \mathbf{B}&= \frac{\partial \mathbf{E}}{\partial t} + \mathbf{J},
\end{align}
\end{subequations}

\noindent
where

\begin{equation}
\label{eq: Charge and current densities}
\rho(\mathbf{x},t) = q N_{tot}\int_{\mathbb{R}^3} f(\mathbf{x}, \mathbf{v}, t)\,d^3\mathbf{v}, \qquad\qquad \mathbf{J}(\mathbf{x},t) = q N_{tot} \int_{\mathbb{R}^3} \mathbf{v} f(\mathbf{x}, \mathbf{v}, t)\,d^3\mathbf{v},
\end{equation}

\noindent
denote the charge density and the electric current density, respectively, and the factor $N_{tot}$ is due to our normalization. The system \eqref{eq: Collisional Vlasov equation}-\eqref{eq: Charge and current densities} is usually referred to as the Vlasov-Maxwell equations. It will also be convenient to express the electric and magnetic fields in terms of the scalar $\varphi(\mathbf{x},t)$ and vector $\mathbf{A}(\mathbf{x},t)$ potentials

\begin{subequations}
\label{eq: Scalar and vector potentials}
\begin{align}
\label{eq: Scalar and vector potentials 1}
\mathbf{E}&= -\nabla_x \varphi - \frac{\partial \mathbf{A}}{\partial t},\\
\mathbf{B}&= \nabla_x \times \mathbf{A},
\end{align}
\end{subequations}

\noindent
as is typical in electrodynamics. The Vlasov-Poisson equations are an approximation of the Vlasov-Maxwell equations in the nonrelativistic zero-magnetic field limit (see Section~\ref{sec: Variational principle for the Vlasov-Poisson equations}). The main goal of this work is to provide a variational characterization of the Vlasov-Maxwell and Vlasov-Poisson equations via a stochastic Lagrange-d'Alembert type of a principle.

Variational principles have proved extremely useful in the study of nonlinear evolution partial differential equations (PDEs). For instance, they often provide physical insights into the problem being considered; facilitate discovery of conserved quantities by relating them to symmetries via Noether's theorem; allow one to determine approximate solutions to PDEs by minimizing the action functional over a class of test functions (see, e.g., \cite{Cooper1993}); and provide a way to construct a class of numerical methods called variational integrators (see \cite{MarsdenPatrickShkoller}, \cite{MarsdenWestVarInt}). A variational principle for the collisionless Vlasov-Maxwell equations was first proposed in \cite{Low1958Lagrangian}. It has been used to derive various particle discretizations of the Vlasov-Maxwell and Vlasov-Poisson equations (see \cite{Evstatiev2013}, \cite{Lewis1970}, \cite{Lewis1972}, \cite{Shadwick2014}, \cite{Stamm2014}), including structure-preserving variational particle-in-cell methods (\cite{Squire2012}, \cite{XiaoLiuQin2013}, \cite{XiaoQinLiu2018}). It has also been applied to gyrokinetic theory (see, e.g., \cite{BottinoSonnendrucker2015}, \cite{SugamaGyrokinetic2000}). For other formulations and extensions see also \cite{YeMorrisonActionPrinciples}.

A structure-preserving description of collisional effects is far less developed. A metriplectic framework for the Vlasov-Maxwell-Landau equations has been presented in \cite{HirvijokiKrausMetriplectic} and \cite{KrausHirvijoki2017}. More recently, a stochastic variational principle has been proposed in \cite{KrausTyranowski2019} to describe collisional effects for the Vlasov equation with a fixed external electric field. To the best of our knowledge, to date no variational principle has been derived for the collisional Vlasov-Maxwell and Vlasov-Poisson equations. In this work we extend the notion of the stochastic Lagrange-d'Alembert principle presented in \cite{KrausTyranowski2019} to plasmas evolving in self-consistent electromagnetic fields. The main idea of our approach is to interpret the Vlasov equation \eqref{eq: Collisional Vlasov equation} as a Fokker-Planck equation and consider the associated stochastic differential equations.

The idea of using stochastic differential equations to model collisions has been pursued by a number of authors over the last few decades; see, e.g., \cite{AllenVictory1994}, \cite{BobylevNanbu2000}, \cite{CadjanIvanov1999}, \cite{CohenDimits2010}, \cite{Dimits2013}, \cite{Frank2003}, \cite{FuQin2020}, \cite{HavlakVictory1996}, \cite{Kleiber2011}, \cite{KrausTyranowski2019}, \cite{Lemons2009}, \cite{Manheimer1997}, \cite{NeunzertVlasovFokkerPlanck1984}, \cite{Sherlock2008}, \cite{Sonnendrucker2015}, \cite{ZhangQin2020}. 

There has been an ever growing body of literature dedicated to stochastic variational principles in recent years. Stochastic variational principles allow the introduction of noise into systems in such a way that the resulting probabilistic models retain all or some of the geometric properties of their deterministic counterparts. For this reason stochastic variational principles have been considered in the context of Lagrangian and Hamiltonian mechanics (\cite{Bismut}, \cite{BouRabeeConstrainedSVI}, \cite{BouRabeeSVI}, \cite{BouRabeeOwhadi2010}, \cite{HolmTyranowskiGalerkin}, \cite{KrausTyranowski2019}, \cite{LaCa-Or2008}, \cite{WangPHD}), soliton dynamics (\cite{HolmTyranowskiSolitons}, \cite{HolmTyranowskiVirasoro}), fluid dynamics (\cite{ArnaudonCruzeiro2014}, \cite{ChenCruzeiroRatiu2018}, \cite{CotterHolm2017}, \cite{CrisanHolm2017}, \cite{Cruzeiro2020}, \cite{GayBalmazHolm2017}, \cite{HolmStochasticFluids2015}, \cite{HolmUncertainty2017}), and kinetic plasma theory (\cite{KrausTyranowski2019}).

\paragraph{Main content}
The main content of the remainder of this paper is, as follows. 
\begin{description}
\item
In Section~\ref{sec:The Vlasov-Maxwell-Fokker-Planck equation} we recast the collisional Vlasov-Maxwell equations as a system of coupled stochastic and partial differential equations.
\item
In Section~\ref{sec:Particle discretization} we discuss the relationship between particle methods and stochastic modeling. We formulate a stochastic particle discretization for the collisional Vlasov-Maxwell equations and cast it in a form that allows the derivation of a variational principle.
\item
In Section~\ref{sec: Variational principle} we describe the variational structure underlying the stochastic particle discretization of the Vlasov-Maxwell system. The main result of this section is Theorem~\ref{thm: Stochastic Lagrange-d'Alembert principle for particles}, in which a stochastic Lagrange-d'Alembert principle for the particle discretization is proved. 
\item
In Section~\ref{sec: Variational principle for the Vlasov-Maxwell equation} we generalize the ideas from Section~\ref{sec: Variational principle} to the original undiscretized equations. The main result of this section is Theorem~\ref{thm: Stochastic Lagrange-d'Alembert principle}, in which a stochastic Lagrange-d'Alembert principle is proved for a class of the collisional Vlasov-Maxwell equations.
\item
In Section~\ref{sec: Variational principle for the Vlasov-Poisson equations} we prove a stochastic Lagrange-d'Alembert principle applicable to the Vlasov-Poisson equations. The main result of this section is Theorem~\ref{thm: Stochastic Lagrange-d'Alembert principle for the VP equations}.
\item
Section~\ref{sec:Summary} contains the summary of our work.
\end{description}

\section{The Vlasov-Maxwell-Fokker-Planck equations}
\label{sec:The Vlasov-Maxwell-Fokker-Planck equation}

\subsection{Stochastic reformulation}
\label{sec: Stochastic reformulation}

Various collision models and various forms of the collision operator $C[f]$ are considered in the plasma physics literature (see, e.g., \cite{CallenPlasmaKinetic}, \cite{MontgomeryPlasmaKinetic}). A key step towards a stochastic variational principle is a probabilistic interpretation of the Vlasov equation~\eqref{eq: Collisional Vlasov equation}. Therefore, in this work we will be interested only in those collision operators for which \eqref{eq: Collisional Vlasov equation} takes the form of a linear or strongly nonlinear Fokker-Planck equation (see, e.g., \cite{FrankNonlinearFokkerPlanck}, \cite{GardinerStochastic}, \cite{RiskenFokkerPlanck}). Namely, we will assume that the collision operator can be expressed as

\begin{equation}
\label{eq:Collision operator}
C[f] = \frac{1}{2}\sum_{i,j=1}^3 \frac{\partial^2}{\partial v^i \partial v^j}\big[D_{ij}(\mathbf{x},\mathbf{v}; f)f\big] - \sum_{i=1}^3 \frac{\partial}{\partial v^i}\big[K_{i}(\mathbf{x},\mathbf{v}; f)f\big],
\end{equation}

\noindent
for some symmetric positive semi-definite matrix $D_{ij}(\mathbf{x},\mathbf{v}; f)$ and vector $K_{i}(\mathbf{x},\mathbf{v}; f)$ functions, where the dependence of $D_{ij}$ and $K_i$ on $f$ may in general be nonlinear, and may involve differential and integral forms of $f$. In that case \eqref{eq: Collisional Vlasov equation} is an integro-differential equation, the so-called strongly nonlinear Fokker-Planck equation (see \cite{FrankNonlinearFokkerPlanck}). In case $D_{ij}$ and $K_i$ are independent of $f$, that is, $D_{ij}(\mathbf{x},\mathbf{v}; f)=D_{ij}(\mathbf{x},\mathbf{v})$ and $K_{i}(\mathbf{x},\mathbf{v}; f)=K_{i}(\mathbf{x},\mathbf{v})$, the Vlasov equation \eqref{eq: Collisional Vlasov equation} reduces to the standard linear Fokker-Planck equation. We will further assume that $D_{ij}$ and $K_i$ can be expressed in the form

\begin{align}
\label{eq:The D and K functions in terms of the forcing terms}
D_{ij}(\mathbf{x},\mathbf{v}; f) = \sum_{\nu=1}^M g^i_\nu g^j_\nu, \qquad \qquad K_i(\mathbf{x},\mathbf{v}; f) = G^i + \frac{1}{2}\sum_{\nu=1}^M \sum_{j=1}^3 \frac{\partial g^i_\nu}{\partial v^j} g^j_\nu,
\end{align}

\noindent
for a vector function $\mathbf{G}(\mathbf{x},\mathbf{v}; f)$, and a family of vector functions $\mathbf{g}_\nu(\mathbf{x},\mathbf{v}; f)$ with $\nu=1,\ldots,M$. Note that given a symmetric positive semi-definite matrix $D_{ij}$, a decomposition \eqref{eq:The D and K functions in terms of the forcing terms} can always be found, but it may not be unique. For instance, one may take $M=3$ and assume that $g^i_\nu = g^\nu_i$ for $i,\nu=1,2,3$. Then the first equation in \eqref{eq:The D and K functions in terms of the forcing terms} implies that the family of functions $g^i_\nu$ can be determined by calculating the square root of the matrix $D_{ij}$, and the second equation in \eqref{eq:The D and K functions in terms of the forcing terms} can be used to calculate the function $\mathbf{G}$. If \eqref{eq: Collisional Vlasov equation} has the form of a Fokker-Planck equation, then the particle density function $f$ can be interpreted as the probability density function for a stochastic process $(\mathbf{X}(t),\mathbf{V}(t))\in \mathbb{R}^3 \times \mathbb{R}^3$. This stochastic process then satisfies the Stratonovich stochastic differential equation (see \cite{FrankNonlinearFokkerPlanck}, \cite{GardinerStochastic}, \cite{KloedenPlatenSDE}, \cite{RiskenFokkerPlanck})

\begin{subequations}
\label{eq: SDE for X and V}
\begin{align}
\label{eq: SDE for X and V 1}
d\mathbf{X} &= \mathbf{V} \, dt, \\
\label{eq: SDE for X and V 2}
d\mathbf{V} &= \bigg(\frac{q}{m}\mathbf{E}(\mathbf{X},t)+\frac{q}{m}\mathbf{V}\times\mathbf{B}(\mathbf{X},t)+\mathbf{G}(\mathbf{X},\mathbf{V}; f) \bigg) \, dt + \sum_{\nu=1}^M \mathbf{g}_\nu (\mathbf{X},\mathbf{V}; f)\circ dW^\nu(t),
\end{align}
\end{subequations}

\noindent
where $W^1(t), \ldots, W^M(t)$ denote the components of the standard $M$-dimensional Wiener process, and $\circ$ denotes Stratonovich integration. Note that the terms $\mathbf{G}$ and $\mathbf{g}_\nu$ can be interpreted as external forces, and that in their absence the equations \eqref{eq: SDE for X and V} reduce to the equations of motion of a charged particle in an electromagnetic field. We will therefore refer to $\mathbf{G}$ and $\mathbf{g}_\nu$ as forcing terms. The electric and magnetic fields are coupled via the Maxwell equations \eqref{eq: Maxwell equations}. It should also be noted that unless \eqref{eq: Collisional Vlasov equation} is linear, the right-hand side of \eqref{eq: SDE for X and V} depends on $f$. In order to obtain a self-consistent system, one can express $f$ in terms of the stochastic processes $\mathbf{X}$ and $\mathbf{V}$ as $f(\mathbf{x},\mathbf{v},t)=\mathbb{E}[\delta(\mathbf{x}-\mathbf{X}(t))\delta(\mathbf{v}-\mathbf{V}(t))]$, where $\mathbb{E}$ denotes the expected value, and $\delta$ is Dirac's delta. This can be further plugged into \eqref{eq: Charge and current densities}. Together, we get 

\begin{subequations}
\label{eq: Charge and current densities in terms of expected values}
\begin{align}
\label{eq: Charge and current densities in terms of expected values 1}
f(\mathbf{x},\mathbf{v},t)&=\mathbb{E}[\delta(\mathbf{x}-\mathbf{X}(t))\delta(\mathbf{v}-\mathbf{V}(t))], \\
\label{eq: Charge and current densities in terms of expected values 2}
\rho(\mathbf{x},t) &= q N_{tot}\mathbb{E}[\delta(\mathbf{x}-\mathbf{X}(t))], \\
\label{eq: Charge and current densities in terms of expected values 3}
\mathbf{J}(\mathbf{x},t) &= q N_{tot} \mathbb{E}[\mathbf{V}(t)\delta(\mathbf{x}-\mathbf{X}(t))].
\end{align}
\end{subequations}

\noindent
Equations \eqref{eq: Maxwell equations}, \eqref{eq: SDE for X and V}, and \eqref{eq: Charge and current densities in terms of expected values} form a self-consistent system of stochastic and partial differential equations whose solutions are the stochastic processes $\mathbf{X}(t)$, $\mathbf{V}(t)$, and the functions $\mathbf{E}(\mathbf{x},t)$, $\mathbf{B}(\mathbf{x},t)$.

\paragraph{Remark.} Upon substituting \eqref{eq: Charge and current densities in terms of expected values 1}, the forcing terms $\mathbf{G}$ and $\mathbf{g}_\nu$ become functionals of the processes $\mathbf{X}$ and $\mathbf{V}$, that is, $\mathbf{G}(\mathbf{x},\mathbf{v}; f) = \mathbf{G}(\mathbf{x},\mathbf{v}; \mathbf{X}, \mathbf{V})$ and $\mathbf{g}_\nu(\mathbf{x},\mathbf{v}; f) = \mathbf{g}_\nu(\mathbf{x},\mathbf{v}; \mathbf{X}, \mathbf{V})$. However, for convenience and simplicity, throughout this work we will stick to the notation $\mathbf{G}(\mathbf{x},\mathbf{v}; f)$ and $\mathbf{g}_\nu(\mathbf{x},\mathbf{v}; f)$, understanding that the probability density is given by \eqref{eq: Charge and current densities in terms of expected values 1} (or by \eqref{eq: Charge and current densities from the law of large numbers 1} for particle discretizations; see Section~\ref{sec:Particle discretization}).

\subsection{Examples}
\label{sec: Examples}

Below we list a few examples of collision operators that fit the decription presented in Section~\ref{sec: Stochastic reformulation}.

\subsubsection{Lenard-Bernstein operator}
\label{sec: Lenard-Bernstein operator}

The Lenard-Bernstein collision operator,

\begin{equation}
\label{eq:Lenard-Bernstein collision operator}
C[f]=\nu_\text{c} \bigg( \mu \nabla_v \cdot (\mathbf{v} f) + \frac{\gamma^2}{2}\Delta_v f \bigg),
\end{equation}

\noindent
where $\nu_\text{c}>0$, $\mu>0$, and $\gamma>0$ are parameters, models small-angle collisions, and was originally used to study longitudinal plasma oscillations (see~\cite{CallenPlasmaKinetic}, \cite{LenardBernstein1958}, \cite{MontgomeryPlasmaKinetic}). It can be easily verified that an example decomposition \eqref{eq:The D and K functions in terms of the forcing terms} for $M=3$ is given by the functions

\begin{equation}
\label{eq: Forcing terms for Lenard-Bernstein}
\mathbf{G}(\mathbf{x}, \mathbf{v})=-\nu_\text{c} \mu \mathbf{v}, \quad \quad
\mathbf{g}_1(\mathbf{x}, \mathbf{v})=
\begin{pmatrix}
\sqrt{\nu_\text{c}}\gamma \\
0 \\
0
\end{pmatrix}, \quad \quad
\mathbf{g}_2(\mathbf{x}, \mathbf{v})=
\begin{pmatrix}
0 \\
\sqrt{\nu_\text{c}}\gamma \\
0
\end{pmatrix}, \quad \quad
\mathbf{g}_3(\mathbf{x}, \mathbf{v})=
\begin{pmatrix}
0 \\
0\\
\sqrt{\nu_\text{c}}\gamma
\end{pmatrix}.
\end{equation}

\noindent
Note that these functions do not explicitly depend on $f$, therefore in this case \eqref{eq: Collisional Vlasov equation} is a linear Fokker-Planck equation.

\subsubsection{Lorentz operator}
\label{sec: Lorentz operator}

The Lorentz collision operator models electron-ion interactions via pitch-angle scattering and is given by the formula

\begin{equation}
\label{eq: Lorentz collision operator}
C[f]=\frac{\nu_\text{c} (|\mathbf{v}|)}{2} \nabla_v \cdot \big(|\mathbf{v}|^2 \mathbb{I}-\mathbf{v}\otimes \mathbf{v}\big) \nabla_v f,
\end{equation}

\noindent
where $\nu_\text{c} (|\mathbf{v}|)$ is the collisional frequency as a function of the absolute value of velocity, $\mathbb{I}$ is the $3\times 3$ identity matrix, and $\otimes$ denotes tensor product. The primary effect of this type of scattering is a change of the direction of the electron's velocity with negligible energy loss. More information about the Lorentz collision operator, including the exact form of the collision frequency, can be found in, e.g., \cite{Banks2016}, \cite{CallenPlasmaKinetic}, \cite{Karney1986}, \cite{MontgomeryPlasmaKinetic}. It can be verified by a straightforward calculation that an example decomposition \eqref{eq:The D and K functions in terms of the forcing terms} for $M=3$ is given by the functions

\begin{equation}
\label{eq: Forcing terms for Lorentz}
\mathbf{G}(\mathbf{x}, \mathbf{v})=0, \quad
\mathbf{g}_1(\mathbf{x}, \mathbf{v})= \sqrt{\nu_\text{c} (|\mathbf{v}|)}
\begin{pmatrix}
0 \\
-v^3 \\
v^2
\end{pmatrix}, \quad
\mathbf{g}_2(\mathbf{x}, \mathbf{v})= \sqrt{\nu_\text{c} (|\mathbf{v}|)}
\begin{pmatrix}
v^3 \\
0 \\
-v^1
\end{pmatrix}, \quad
\mathbf{g}_3(\mathbf{x}, \mathbf{v})= \sqrt{\nu_\text{c} (|\mathbf{v}|)}
\begin{pmatrix}
-v^2 \\
v^1\\
0
\end{pmatrix}.
\end{equation}

\noindent
Note that these functions do not explicitly depend on $f$, therefore also in this case \eqref{eq: Collisional Vlasov equation} is a linear Fokker-Planck equation.

\subsubsection{Coulomb/Landau operator}
\label{sec: Coulomb operator}

The more general Coulomb collision operator has the form \eqref{eq:Collision operator} with

\begin{align}
\label{eq: D and K for Coulomb operator}
D_{ij}(\mathbf{x},\mathbf{v}; f)&=N_{tot}\Gamma \int_{\mathbb{R}^3}\frac{|\mathbf{v}-\mathbf{u}|^2 \delta_{ij}-(v^i-u^i)(v^j-u^j)}{|\mathbf{v}-\mathbf{u}|^3} f(\mathbf{x},\mathbf{u},t)\,d^3 \mathbf{u}, \nonumber \\
K_{i}(\mathbf{x},\mathbf{v}; f)&=-2N_{tot}\Gamma \int_{\mathbb{R}^3}\frac{v^i-u^i}{|\mathbf{v}-\mathbf{u}|^3} f(\mathbf{x},\mathbf{u},t)\,d^3 \mathbf{u},
\end{align}

\noindent
where $N_{tot}$ appears due to our normalization of $f$, $\delta_{ij}$ is Kronecker's delta, and $\Gamma=(4 \pi q^4/m^2)\ln \Lambda$, with $\ln \Lambda$ denoting the so-called Coulomb logarithm. The Coulomb operator describes collisions in which the fundamental two-body force obeys an inverse square law, and makes the assumption that small-angle collisions are more important that collisions resulting in large momentum changes (see \cite{CallenPlasmaKinetic}, \cite{MontgomeryPlasmaKinetic}, \cite{Rosenbluth1957}). A decomposition \eqref{eq:The D and K functions in terms of the forcing terms} can be found, for example, via the procedure outlined in Section~\ref{sec: Stochastic reformulation}. However, the expressions for $\mathbf{G}$ and $\mathbf{g}_\nu$ are complicated, therefore we are not stating them here explicitly. Note that $D_{ij}$ and $K_i$ explicitly depend on $f$. Therefore, for the Coulomb operator the Vlasov equation \eqref{eq: Collisional Vlasov equation} is a strongly nonlinear Fokker-Planck equation. Note also that $D_{ij}$ and $K_i$ can be explicitly written as functionals of the stochastic processes $\mathbf{X}$ and $\mathbf{V}$ as

\begin{align}
\label{eq: D and K for Coulomb operator - as functionals of X and V}
D_{ij}(\mathbf{x},\mathbf{v}; \mathbf{X}, \mathbf{V})&=N_{tot}\Gamma \cdot\mathbb{E}\bigg[\frac{|\mathbf{v}-\mathbf{V}(t)|^2 \delta_{ij}-(v^i-V^i(t))(v^j-V^j(t))}{|\mathbf{v}-\mathbf{V}(t)|^3} \delta(\mathbf{x}-\mathbf{X}(t))\bigg], \nonumber \\
K_{i}(\mathbf{x},\mathbf{v}; \mathbf{X}, \mathbf{V})&=-2N_{tot}\Gamma\cdot \mathbb{E}\bigg[\frac{v^i-V^i(t)}{|\mathbf{v}-\mathbf{V}(t)|^3} \delta(\mathbf{x}-\mathbf{X}(t))\bigg].
\end{align}

\noindent
The collision operator \eqref{eq:Collision operator} with $D_{ij}$ and $K_i$ as in \eqref{eq: D and K for Coulomb operator} can also be expressed in an equivalent, although more symmetric form, known as the Landau form of the Coulomb operator, or simply the Landau collision operator (see, e.g., \cite{CallenPlasmaKinetic}).

\section{Stochastic particle discretization of the Vlasov-Maxwell equations}
\label{sec:Particle discretization}

Particle modelling is one of the most popular numerical techniques for solving the Vlasov equation (see, e.g., \cite{BirdsallPlasma}, \cite{HockneyParticles}). In this section we discuss the connections between particle methods and stochastic modelling.

The standard particle method for the collisionless Vlasov equation \eqref{eq: Collisional Vlasov equation} (with $C[f]=0$) consists of substituting the Ansatz $f(\mathbf{x},\mathbf{v},t) = \sum_{a=1}^N w_a \delta(\mathbf{x}-\mathbf{X}_a(t))\delta(\mathbf{v}-\mathbf{V}_a(t))$ for the particle density function, and deriving the corresponding ordinary differential equations satisfied by the \textquoteleft particle' positions $\mathbf{X}_a(t)$ and velocities $\mathbf{V}_a(t)$, which turn out to be the characteristic equations. Note that we did a qualitatively similar thing in Section~\ref{sec: Stochastic reformulation}, where we turned the original collisional Vlasov equation into the system of stochastic differential equations \eqref{eq: SDE for X and V}, which in the absence of the forcing terms $\mathbf{G}$ and $\mathbf{g}_\nu$ have the same form as the characteristic equations, and in fact the \textquoteleft particles' $\mathbf{X}_a(t)$ and $\mathbf{V}_a(t)$ can be interpreted as realizations of the stochastic processes $\mathbf{X}(t)$ and $\mathbf{V}(t)$ for different elementary events $\omega \in \Omega$.

When the right-hand side of \eqref{eq: SDE for X and V} does not depend on $f$, then \eqref{eq: SDE for X and V} can in principle be solved numerically with the help of any standard stochastic numerical method (see, e.g., \cite{KloedenPlatenSDE}), and each realization of the stochastic processes can be simulated independently of others. When the right-hand side of \eqref{eq: SDE for X and V} depends on $f$, then all realizations of the stochastic processes have to be solved for simultaneously, so that at each time step the probability density function $f$ can be numerically approximated (see, e.g., \cite{FrankNonlinearFokkerPlanck}). Such an approach, however, does not quite lend itself to a geometric formulation. Therefore, in order to be able to introduce a variational principle in Section~\ref{sec: Variational principle}, let us consider $2N$ stochastic processes $\mathbf{X}_1, \mathbf{V}_1, \ldots, \mathbf{X}_N, \mathbf{V}_N$, with each pair $(\mathbf{X}_a, \mathbf{V}_a)$ satisfying the stochastic differential system

\begin{subequations}
\label{eq: SDE for X_a and V_a}
\begin{align}
\label{eq: SDE for X_a and V_a 1}
d\mathbf{X}_a &= \mathbf{V}_a \, dt, \\
\label{eq: SDE for X_a and V_a 2}
d\mathbf{V}_a &= \bigg(\frac{q}{m}\mathbf{E}(\mathbf{X}_a,t)+\frac{q}{m}\mathbf{V}_a\times\mathbf{B}(\mathbf{X}_a,t)+\mathbf{G}(\mathbf{X}_a,\mathbf{V}_a; f) \bigg) \, dt + \sum_{\nu=1}^M \mathbf{g}_\nu (\mathbf{X}_a,\mathbf{V}_a; f)\circ dW^\nu_a(t),
\end{align}
\end{subequations}

\noindent
for $a=1,\ldots,N$, where $\mathbf{W}_a=(W_a^1,\ldots,W_a^M)$ are $N$ independent $M$-dimensional Wiener processes. Note that the systems \eqref{eq: SDE for X_a and V_a} are decoupled from each other for different values of $a$, and each system is driven by an independent Wiener process $\mathbf{W}_a$. Therefore, the pairs $(\mathbf{X}_a, \mathbf{V}_a)$ for $a=1,\ldots,N$ are independent identically distributed (i.i.d.) stochastic processes, each with the probability density function $f$ that satisfies the original Fokker-Planck equation \eqref{eq: Collisional Vlasov equation}. In that sense \eqref{eq: SDE for X_a and V_a} is equivalent to \eqref{eq: SDE for X and V}. The advantage is that instead of considering $N$ realizations of the 6-dimensional stochastic process $(\mathbf{X}, \mathbf{V})$ in \eqref{eq: SDE for X and V}, one can consider one realization of the $6N$-dimensional process $(\mathbf{X}_1, \mathbf{V}_1, \ldots, \mathbf{X}_N, \mathbf{V}_N)$ in \eqref{eq: SDE for X_a and V_a}. Such a reformulation will allow us to identify an underlying stochastic variational principle in Section~\ref{sec: Variational principle}. The last step leading to the stochastic particle discretization is approximating the probability density function $f$ in \eqref{eq: SDE for X_a and V_a}. This can be done with the help of the law of large numbers, namely, one can approximate \eqref{eq: Charge and current densities in terms of expected values} for large $N$ as

\begin{subequations}
\label{eq: Charge and current densities from the law of large numbers}
\begin{align}
\label{eq: Charge and current densities from the law of large numbers 1}
f(\mathbf{x},\mathbf{v},t)&\approx\frac{1}{N}\sum_{a=1}^N\delta(\mathbf{x}-\mathbf{X}_a(t))\delta(\mathbf{v}-\mathbf{V}_a(t)), \\
\label{eq: Charge and current densities from the law of large numbers 2}
\rho(\mathbf{x},t) &\approx \frac{q N_{tot}}{N}\sum_{a=1}^N\delta(\mathbf{x}-\mathbf{X}_a(t)), \\
\label{eq: Charge and current densities from the law of large numbers 3}
\mathbf{J}(\mathbf{x},t) &\approx \frac{q N_{tot}}{N} \sum_{a=1}^N\mathbf{V}_a(t)\delta(\mathbf{x}-\mathbf{X}_a(t)).
\end{align}
\end{subequations}

\noindent
It is easy to see that \eqref{eq: Charge and current densities from the law of large numbers 1} coincides with the standard Ansatz used in particle modelling (with the weights $w_a=1/N$). Therefore, the system of stochastic differential equations \eqref{eq: SDE for X_a and V_a} with the approximation \eqref{eq: Charge and current densities from the law of large numbers}, and with the electromagnetic field coupled via the Maxwell equations \eqref{eq: Maxwell equations}, can be considered as a stochastic particle discretization of the collisional Vlasov-Maxwell equations.

\paragraph{Remark.} Upon substituting \eqref{eq: Charge and current densities from the law of large numbers 1}, the forcing terms $\mathbf{G}$ and $\mathbf{g}_\nu$ become functionals of the processes $\mathbf{X}_1, \ldots, \mathbf{X}_N$ and $\mathbf{V}_1, \ldots, \mathbf{V}_N$. Similar to the discussion in Section~\ref{sec: Stochastic reformulation}, for convenience and simplicity, throughout this work we will stick to the notation $\mathbf{G}(\mathbf{x},\mathbf{v}; f)$ and $\mathbf{g}_\nu(\mathbf{x},\mathbf{v}; f)$, understanding that the probability density is given by \eqref{eq: Charge and current densities from the law of large numbers 1} for particle discretizations.

\section{Variational principle for the particle discretization}
\label{sec: Variational principle}

In this section we propose an action functional which can be understood as a stochastic version of the Low action functional (see \cite{Low1958Lagrangian}), and we prove a variational principle underlying the particle discretization introduced in Section~\ref{sec:Particle discretization}, akin to the stochastic Lagrange-d'Alembert principle first introduced in \cite{KrausTyranowski2019}.

\subsection{Function spaces}
\label{eq: Function spaces}
Before we introduce the action functional, we need to identify suitable function spaces on which it will be defined. For simplicity, let our spatial domain be the whole three-dimensional space $\mathbb{R}^3$, and let us consider the time interval $[0,T]$ for some $T>0$. Let $(\Omega, \mathcal{F},\mathbb{P})$ be the probability space with the filtration $\{\mathcal{F}_t\}_{t \geq 0}$, and let $\mathbf{W}_a=(W_a^1,\ldots,W_a^M)$ for $a=1,\ldots,N$ denote $N$ independent $M$-dimensional Wiener processes on that probability space (such that $W_a^\nu(t)$ is $\mathcal{F}_t$-measurable for all $t\geq 0$). The stochastic processes $\mathbf{X}_a(t)$ and $\mathbf{V}_a(t)$ satisfy \eqref{eq: SDE for X_a and V_a}, so they are in particular $\mathcal{F}_t$-adapted semimartingales, and have almost surely continuous paths (see \cite{ProtterStochastic}). We also notice that there is no diffusion term in \eqref{eq: SDE for X_a and V_a 1}, therefore we even have that the processes $\mathbf{X}_a(t)$ are almost surely of class $C^1$. We introduce the notation

\begin{equation}
\label{eq: Space of stochastic processes}
C^k_{\Omega,T}=\big\{ \mathbf{X}\in L^2(\Omega \times [0, T], \mathbb{R}^3)  \, \big| \, \text{$\mathbf{X}$ is a $\mathcal{F}_t$-adapted semimartingale, almost surely of class $C^k$} \big\}.
\end{equation}

\noindent
Note that this set is a vector space (see \cite{ProtterStochastic}). The potentials $\varphi$ and $\mathbf{A}$ satisfy the Maxwell equations \eqref{eq: Maxwell equations} and \eqref{eq: Scalar and vector potentials}, therefore we require them to be of class $C^2$. However, since our spatial domain is unbounded, we further need to assume that the vector fields $\mathbf{E}$ and $\mathbf{B}$ are square integrable. We introduce the notation

\begin{align}
\label{eq: Space of potentials}
\mathfrak{X}(\mathbb{R}^n)&=\big\{ \mathbf{A} \in C^2(\mathbb{R}^3\times [0, T],\mathbb{R}^n) \cap L^{\infty}(\mathbb{R}^3\times [0, T],\mathbb{R}^n)  \, \big| \, \text{$\forall i,j: \frac{\partial A^i}{\partial x^j}, \frac{\partial A^i}{\partial t} \in L^2(\mathbb{R}^3\times [0, T])$} \big\}, \nonumber \\
\mathfrak{X}_0(\mathbb{R}^n)&=C^2_0(\mathbb{R}^3\times [0, T],\mathbb{R}^n),
\end{align} 

\noindent
where $\mathfrak{X}_0(\mathbb{R}^n)$ is simply the space of compactly supported elements of $\mathfrak{X}(\mathbb{R}^n)$.

\subsection{Action functional}
\label{sec: Action functional}

Let us consider the action functional

\begin{equation}
\label{eq: Action functional for particles - domain}
S: \Omega \times \big(C^1_{\Omega,T}\big)^N \times \big(C^0_{\Omega,T}\big)^N \times \big(C^0_{\Omega,T}\big)^N \times \mathfrak{X}(\mathbb{R}) \times \mathfrak{X}(\mathbb{R}^3) \longrightarrow \mathbb{R}
\end{equation}

\noindent
defined by the formula

\begin{align}
\label{eq: Action functional for particles - formula}
S[&\mathbf{X}_1,\ldots,\mathbf{X}_N,\mathbf{V}_1,\ldots,\mathbf{V}_N,\mathbf{P}_1,\ldots,\mathbf{P}_N,\varphi,\mathbf{A}] = \nonumber\\
&\frac{N_{tot}}{N}\sum_{a=1}^N\Bigg[ \int_0^T \bigg( \frac{m}{2}|\mathbf{V}_a|^2-q \varphi(\mathbf{X}_a,t)+q\mathbf{V}_a\cdot\mathbf{A}(\mathbf{X}_a,t)+\mathbf{P}_a\cdot(\mathbf{\dot X}_a-\mathbf{V}_a)\bigg)\,dt \Bigg]+\int_0^T\!\!\!\!\! \int_{\mathbb{R}^3}\frac{1}{2}(|\mathbf{E}|^2-|\mathbf{B}|^2)\,d^3\mathbf{x}dt,
\end{align}

\noindent
where $\mathbf{\dot X}_a$ denotes the time derivative of $\mathbf{X}_a$, and the electric and magnetic fields $\mathbf{E}$ and $\mathbf{B}$ are expressed in terms of the partial derivatives of the potentials $\varphi$ and $\mathbf{A}$ as in \eqref{eq: Scalar and vector potentials}. Following the standard convention in stochastic analysis, we will omit writing elementary events $\omega \in \Omega$ as arguments of stochastic processes unless otherwise needed, i.e., $\mathbf{X}_a(t) \equiv \mathbf{X}_a(\omega, t)$. The action functional \eqref{eq: Action functional for particles - formula} resembles the Low action functional introduced in \cite{Low1958Lagrangian}. In fact, it can be viewed as a particle discretization of the Low action functional, written in terms of stochastic processes (see \cite{Evstatiev2013}, \cite{Lewis1970}, \cite{Shadwick2014}, \cite{Squire2012}, \cite{XiaoLiuQin2013}, \cite{XiaoQinLiu2018}). The term $\mathbf{P}_a\cdot(\mathbf{\dot X}_a-\mathbf{V}_a)$ is the so-called Hamilton-Pontryagin kinematic constraint (see, e.g., \cite{LallWestHamiltonian}, \cite{YoshimuraMarsdenDiracStructures2}) that enforces that $\mathbf{\dot X}_a=\mathbf{V}_a$ using the Lagrange multiplier $\mathbf{P}_a$, which turns out to be the conjugate momentum. In principle, this constraint is not necessary in our context---we could omit it and replace $\mathbf{V}_a$ with $\mathbf{\dot X}_a$ in \eqref{eq: Action functional for particles - formula}. We will, however, keep it in order to make a clear connection with the theory developed in \cite{BouRabeeSVI}. It also makes the notation in the proof of the stochastic Lagrange-d'Alembert principle in Section~\ref{sec: The stochastic Lagrange-d'Alembert principle} more convenient and elegant. Note that the action functional $S$ is itself a random variable, as $\omega \in \Omega$ is one of its arguments. The variations of $S$ with respect to its arguments are given by (see Appendix~\ref{sec: The variations of the action functional S} for the details of the derivations)

\begin{subequations}
\begin{align}
\label{eq: Variation of S wrt X - calculation 2}
\delta_{\mathbf{X}_a}S =&\frac{N_{tot}}{N}\bigg(\mathbf{P}_a(T) \cdot \delta \mathbf{X}_a(T)-\mathbf{P}_a(0) \cdot \delta \mathbf{X}_a(0) \bigg) \nonumber \\
&+\frac{N_{tot}}{N}\Bigg[ - \int_0^T \delta \mathbf{X}_a\circ d\mathbf{P}_a + \int_0^T \bigg( -q \nabla_x \varphi(\mathbf{X}_a,t)\cdot \delta\mathbf{X}_a+q\sum_{i,j=1}^3V^j\frac{\partial A^j}{\partial x^i}(\mathbf{X}_a,t)\delta X^i_a\bigg)\,dt \Bigg], \\
\label{eq: Variation of S wrt V}
\delta_{\mathbf{V}_a}S =& \frac{N_{tot}}{N}\int_0^T \big( m\mathbf{V}_a+q\mathbf{A}(\mathbf{X}_a,t)-\mathbf{P}_a\big)\cdot \delta\mathbf{V}_a\,dt, \\
\label{eq: Variation of S wrt P}
\delta_{\mathbf{P}_a}S =& \frac{N_{tot}}{N}\int_0^T \big( \mathbf{\dot X}_a -\mathbf{V}_a\big)\cdot \delta\mathbf{P}_a\,dt, \\
\label{eq: Variation of S wrt A - calculation}
\delta_{\mathbf{A}}S =& \int_0^T\!\!\!\!\! \int_{\mathbb{R}^3}\bigg(\mathbf{J}+\frac{\partial \mathbf{E}}{\partial t}-\nabla_x\times \mathbf{B}\bigg)\cdot \delta \mathbf{A}\,d^3\mathbf{x}dt - \int_{\mathbb{R}^3}\Big( \mathbf{E}(\mathbf{x},T)\cdot\delta\mathbf{A}(\mathbf{x},T) - \mathbf{E}(\mathbf{x},0)\cdot\delta\mathbf{A}(\mathbf{x},0) \Big)\,d^3\mathbf{x},\\
\label{eq: Variation of S wrt phi - calculation}
\delta_{\varphi}S =& \int_0^T\!\!\!\!\! \int_{\mathbb{R}^3}\big(\nabla_x\cdot\mathbf{E}-\rho\big)\cdot \delta \varphi\,d^3\mathbf{x}dt,
\end{align}
\end{subequations}

\noindent
where $\rho$ and $\mathbf{J}$ are defined in \eqref{eq: Charge and current densities from the law of large numbers 2} and \eqref{eq: Charge and current densities from the law of large numbers 3}, respectively. The total variation of $S$ with respect to the variations of all arguments equals

\begin{equation}
\label{eq: Total variation of S}
\delta S =\sum_{a=1}^N \Big(\delta_{\mathbf{X}_a}S+\delta_{\mathbf{V}_a}S+\delta_{\mathbf{P}_a}S\Big)+\delta_{\varphi}S+\delta_{\mathbf{A}}S. 
\end{equation}

\subsection{The stochastic Lagrange-d'Alembert principle}
\label{sec: The stochastic Lagrange-d'Alembert principle}

While the standard rules of the calculus of variations apply to the variations \eqref{eq: Variation of S wrt A - calculation} and \eqref{eq: Variation of S wrt phi - calculation}, the variations \eqref{eq: Variation of S wrt X - calculation 2}, \eqref{eq: Variation of S wrt V}, \eqref{eq: Variation of S wrt P} involve stochastic processes and stochastic integrals. Therefore, before we can formulate a stochastic variational principle, we need the following lemma, whose proof is given in Appendix~\ref{sec: Proof of Lemma}.\\

\begin{lemma}
\label{thm: Fundamental lemma of calculus of variations}
Let $\mathbf{X}\in C^1_{\Omega,T}$ and $\mathbf{V}, \mathbf{P}\in C^0_{\Omega,T}$, and let $\mathbf{R},\mathbf{r}_\nu: \mathbb{R}^3 \times \mathbb{R}^3 \longrightarrow \mathbb{R}^3$ be of class $C^1$ for $\nu=1,\ldots,M$. Then

\begin{equation}
\label{eq: Lemma2 equation 1}
\forall \mathbf{Z}\in C^1_{\Omega,T}: \int_0^T \Big( \mathbf{Z}(t)\circ d\mathbf{P}-\mathbf{R}(\mathbf{X},\mathbf{V})\cdot\mathbf{Z}(t)\,dt - \sum_{\nu=1}^M\mathbf{r}_\nu(\mathbf{X},\mathbf{V})\cdot\mathbf{Z}(t)\circ dW^\nu(t) \Big) =0 \text{\quad a.s.}
\end{equation}

\noindent
if and only if

\begin{equation}
\label{eq: Lemma2 equation 2}
\forall t\in[0,T]: \int_0^t \Big(d\mathbf{P}(\tau)-\mathbf{R}(\mathbf{X}(\tau),\mathbf{V}(\tau))\,d\tau - \sum_{\nu=1}^M\mathbf{r}_\nu(\mathbf{X}(\tau),\mathbf{V}(\tau))\circ dW^\nu(\tau) \Big) =0 \text{\quad a.s.,}
\end{equation}

\noindent
where \textquoteleft a.s.' means almost surely.
\end{lemma}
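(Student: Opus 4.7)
The plan is to reduce the lemma to a pathwise version of the classical fundamental lemma of the calculus of variations. Introduce the auxiliary continuous semimartingale
\begin{equation*}
\mathbf{M}(t) := \int_0^t \Big( d\mathbf{P}(\tau) - \mathbf{R}(\mathbf{X}(\tau),\mathbf{V}(\tau))\,d\tau - \sum_{\nu=1}^M \mathbf{r}_\nu(\mathbf{X}(\tau),\mathbf{V}(\tau))\circ dW^\nu(\tau)\Big),
\end{equation*}
so that \eqref{eq: Lemma2 equation 1} is equivalent to $\int_0^T \mathbf{Z}\circ d\mathbf{M} = 0$ a.s.\ for every $\mathbf{Z}\in C^1_{\Omega,T}$, and \eqref{eq: Lemma2 equation 2} is equivalent to $\mathbf{M}\equiv 0$ a.s.\ on $[0,T]$.

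The $(\Leftarrow)$ direction is immediate: if $\mathbf{M}\equiv 0$ a.s., then substituting the implied expression for $d\mathbf{P}$ into the first integrand shows that \eqref{eq: Lemma2 equation 1} vanishes pathwise for every $\mathbf{Z}$, by linearity of the Stratonovich integral.

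The forward direction is the main content. First restrict attention to \emph{deterministic} test processes $\mathbf{Z}\in C^1_c((0,T);\mathbb{R}^3)$. For such $\mathbf{Z}$ the quadratic covariation $[\mathbf{Z},\mathbf{M}]$ vanishes, so the Stratonovich and It\^o integrals agree, and the classical integration-by-parts formula yields
\begin{equation*}
\int_0^T \mathbf{Z}(t)\circ d\mathbf{M}(t) = -\int_0^T \mathbf{M}(t)\cdot\dot{\mathbf{Z}}(t)\,dt \quad\text{a.s.,}
\end{equation*}
the boundary terms vanishing by compact support. Pick a countable subset $\{\mathbf{Z}_n\}\subset C^1_c((0,T);\mathbb{R}^3)$ that is dense in the $C^1$ topology, and intersect the corresponding null sets to obtain a single event $\Omega_0$ of full probability on which $\int_0^T \mathbf{M}\cdot\dot{\mathbf{Z}}_n\,dt = 0$ for every $n$.

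For every $\omega\in\Omega_0$ the path $t\mapsto\mathbf{M}(\omega,t)$ is continuous, hence uniformly bounded on $[0,T]$, so uniform approximation in $C^1$ of an arbitrary $\mathbf{Z}\in C^1_c((0,T);\mathbb{R}^3)$ by the $\mathbf{Z}_n$ extends the identity to all such $\mathbf{Z}$. The classical fundamental lemma, applied pathwise, then forces $\mathbf{M}(\omega,\cdot)$ to be constant on $(0,T)$; combined with $\mathbf{M}(\omega,0)=0$ and continuity at the endpoints, this gives $\mathbf{M}(\omega,t)=0$ for all $t\in[0,T]$, which is \eqref{eq: Lemma2 equation 2}. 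The main delicate point is the interplay of quantifiers: one cannot directly invoke the deterministic fundamental lemma because the null set on which the integral identity fails might a priori depend on $\mathbf{Z}$, and moreover arbitrary $\mathbf{Z}\in C^1_{\Omega,T}$ carry their own quadratic variation so the integration by parts would produce an extra covariation term. Passing through a countable dense subfamily of \emph{deterministic} test processes resolves both difficulties simultaneously.
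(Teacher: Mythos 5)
Your proof is correct, but the forward implication runs along a genuinely different route than the paper's. Both arguments share the key observation that it suffices to test \eqref{eq: Lemma2 equation 1} against \emph{deterministic} $C^1$ test functions, and your backward direction (substituting $d\mathbf{P}$, i.e.\ associativity of the Stratonovich integral) is the same as the paper's. For the forward direction, however, the paper follows Bou-Rabee--Owhadi: it fixes $t$ and a basis vector $\mathbf{e}_i$, builds explicit $C^1$ ramp functions $\mathbf{Z}_\epsilon\to\mathbbm{1}_{[0,t]}\mathbf{e}_i$, estimates the difference of the corresponding integrals via the It\^o isometry to get mean-square convergence, and then invokes Borel--Cantelli to pass to almost sure convergence along a subsequence, concluding that the integral over $[0,t]$ vanishes a.s.\ for each fixed $t$. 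You instead integrate by parts against compactly supported deterministic test functions (the covariation $[\mathbf{Z},\mathbf{M}]$ vanishing), which converts the hypothesis into the pathwise identity $\int_0^T \mathbf{M}\cdot\dot{\mathbf{Z}}\,dt=0$, and then you apply the classical du Bois-Reymond lemma $\omega$-by-$\omega$ after intersecting the null sets over a countable $C^1$-dense family. Your version avoids all stochastic estimates (It\^o isometry, Borel--Cantelli, subsequence extraction) and delivers a single exceptional null set valid simultaneously for all $t\in[0,T]$, which is slightly stronger than the literal statement \eqref{eq: Lemma2 equation 2}; the price is the separability/density argument and reliance on the semimartingale product rule for $\mathbf{M}$. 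Both proofs share the same implicit regularity caveat, namely that $\mathbf{r}_\nu(\mathbf{X},\mathbf{V})$ (only a $C^1$ image of semimartingales) is treated as having well-defined quadratic covariation with $W^\nu$, so your proposal is at the same level of rigor as the paper on that point.
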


\paragraph{Remark.} Equation \eqref{eq: Lemma2 equation 2} means that $\mathbf{P}(t)$, $\mathbf{X}(t)$, and $\mathbf{V}(t)$ satisfy a stochastic differential equation, which can be written in the differential form as

\begin{equation}
\label{eq: SDE for P}
d\mathbf{P}(t)=\mathbf{R}(\mathbf{X}(t),\mathbf{V}(t))\,dt + \sum_{\nu=1}^M\mathbf{r}_\nu(\mathbf{X}(t),\mathbf{V}(t))\circ dW^\nu(t).
\end{equation}

We are now in a position to formulate and prove a stochastic variational principle that generalizes the deterministic Lagrange-d'Alembert principle for forced Lagrangian and Hamiltonian systems, akin to the stochastic variational principle introduced in \cite{KrausTyranowski2019}.

\begin{thm}[{\bf Stochastic Lagrange-d'Alembert principle for particles}]
\label{thm: Stochastic Lagrange-d'Alembert principle for particles}
Let $\mathbf{X}_a\in C^1_{\Omega,T}$ and $\mathbf{V}_a, \mathbf{P}_a\in C^0_{\Omega,T}$ for $a=1,\ldots,N$ be stochastic processes, and let $\mathbf{A}\in \mathfrak{X}(\mathbb{R}^3)$, $\varphi\in \mathfrak{X}(\mathbb{R})$ be functions. Assume that $\mathbf{G}(\cdot, \cdot;f)$ and $\mathbf{g}_\nu(\cdot, \cdot;f)$ for $\nu=1,\ldots,M$ are $C^1$ functions of their arguments, where $f$ is given by \eqref{eq: Charge and current densities from the law of large numbers 1}. Then $\mathbf{X}_a$, $\mathbf{V}_a$, $\mathbf{P}_a$, $\mathbf{A}$, and $\varphi$ satisfy the system of stochastic differential equations

\begin{subequations}
\label{eq: SDEs for the Lagrange-d'Alembert principle for particles}
\begin{align}
\label{eq: SDEs for the Lagrange-d'Alembert principle for particles 1}
\mathbf{\dot X}_a &= \mathbf{V}_a, \\
\label{eq: SDEs for the Lagrange-d'Alembert principle for particles 2}
\mathbf{P}_a      &= m\mathbf{V}_a+q\mathbf{A}(\mathbf{X}_a,t), \\
\label{eq: SDEs for the Lagrange-d'Alembert principle for particles 3}
dP^i_a     &= \bigg(-q \frac{\partial \varphi}{\partial x^i}(\mathbf{X}_a,t) + q\sum_{j=1}^3V^j_a\frac{\partial A^j}{\partial x^i}(\mathbf{X}_a,t) + m \, G^i(\mathbf{X}_a, \mathbf{V}_a;f) \bigg)\,dt +  m\sum_{\nu=1}^M g^i_\nu (\mathbf{X}_a,\mathbf{V}_a; f)\circ dW^\nu_a(t),
\end{align}
\end{subequations}

\noindent
for $i=1,2,3$ and $a=1,\ldots,N$, together with the Maxwell equations \eqref{eq: Maxwell equations}, \eqref{eq: Scalar and vector potentials}, \eqref{eq: Charge and current densities from the law of large numbers} on the time interval $[0,T]$, if and only if they satisfy the following variational principle

\begin{equation}
\label{eq: stochastic Lagrange-d'Alembert principle for particles}
\delta S + \frac{mN_{tot}}{N} \sum_{a=1}^N \bigg[ \int_0^T \mathbf{G}(\mathbf{X}_a,\mathbf{V}_a; f)\cdot \delta \mathbf{X}_a \, dt + \sum_{\nu=1}^M \int_0^T\mathbf{g}_\nu (\mathbf{X}_a,\mathbf{V}_a; f)\cdot \delta \mathbf{X}_a\circ dW^\nu_a(t) \bigg] = 0
\end{equation}

\noindent
for arbitrary variations $\delta\mathbf{X}_a\in C^1_{\Omega,T}$, $\delta\mathbf{V}_a, \delta\mathbf{P}_a\in C^0_{\Omega,T}$, $\delta \mathbf{A}\in \mathfrak{X}_0(\mathbb{R}^3)$, and $\delta\varphi\in \mathfrak{X}_0(\mathbb{R})$, with  $\delta\mathbf{X}_a(0)=\delta\mathbf{X}_a(T)=0$ almost surely, and $\delta \mathbf{A}(\mathbf{x},0)=\delta \mathbf{A}(\mathbf{x},T)=0$ for all $\mathbf{x} \in \mathbb{R}^3$, where the action functional $S$ is given by \eqref{eq: Action functional for particles - formula}.
\end{thm}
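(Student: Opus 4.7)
The plan is to substitute the explicit variation formulas \eqref{eq: Variation of S wrt X - calculation 2}--\eqref{eq: Variation of S wrt phi - calculation} into the total variation identity \eqref{eq: Total variation of S}, combine them with the forcing contribution on the left-hand side of \eqref{eq: stochastic Lagrange-d'Alembert principle for particles}, and extract each Euler--Lagrange equation by independently varying each argument. The equivalence will then follow by combining the standard fundamental lemma of the calculus of variations (for the deterministic fields $\varphi$ and $\mathbf{A}$) with the stochastic Lemma~\ref{thm: Fundamental lemma of calculus of variations} (for $\mathbf{X}_a$).

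First I would dispatch the easy pieces. Setting all variations to zero except $\delta\varphi\in\mathfrak{X}_0(\mathbb{R})$, equation \eqref{eq: Variation of S wrt phi - calculation} yields Gauss's law $\nabla_x\cdot\mathbf{E}=\rho$; similarly, varying only $\delta\mathbf{A}\in\mathfrak{X}_0(\mathbb{R}^3)$ with vanishing endpoint values and using \eqref{eq: Variation of S wrt A - calculation} produces the Ampere--Maxwell law $\nabla_x\times\mathbf{B}=\partial_t\mathbf{E}+\mathbf{J}$. The two remaining Maxwell equations hold automatically because $\mathbf{E}$ and $\mathbf{B}$ derive from the potentials via \eqref{eq: Scalar and vector potentials}. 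The algebraic variations $\delta\mathbf{V}_a$ and $\delta\mathbf{P}_a$, ranging freely over $C^0_{\Omega,T}$, are equally direct: by \eqref{eq: Variation of S wrt V} and \eqref{eq: Variation of S wrt P} they yield the Legendre relation $\mathbf{P}_a=m\mathbf{V}_a+q\mathbf{A}(\mathbf{X}_a,t)$ and the kinematic constraint $\dot{\mathbf{X}}_a=\mathbf{V}_a$, respectively.

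The heart of the argument is the $\mathbf{X}_a$-variation. Taking the other variations to vanish, the endpoint terms in \eqref{eq: Variation of S wrt X - calculation 2} drop out because $\delta\mathbf{X}_a(0)=\delta\mathbf{X}_a(T)=0$ almost surely, and what remains---combined with the forcing contribution from \eqref{eq: stochastic Lagrange-d'Alembert principle for particles}---is exactly the premise \eqref{eq: Lemma2 equation 1} of Lemma~\ref{thm: Fundamental lemma of calculus of variations}, with
\begin{equation*}
R^i(\mathbf{x},\mathbf{v}) = -q\frac{\partial\varphi}{\partial x^i}(\mathbf{x},t) + q\sum_{j=1}^{3} v^j \frac{\partial A^j}{\partial x^i}(\mathbf{x},t) + m\, G^i(\mathbf{x},\mathbf{v};f), \qquad r_\nu^i(\mathbf{x},\mathbf{v}) = m\, g_\nu^i(\mathbf{x},\mathbf{v};f).
\end{equation*}
Applying the lemma componentwise (in $i=1,2,3$, for each particle $a$) converts this integrated identity into the pointwise SDE \eqref{eq: SDEs for the Lagrange-d'Alembert principle for particles 3} for $\mathbf{P}_a$. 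The converse direction is an immediate substitution: given \eqref{eq: SDEs for the Lagrange-d'Alembert principle for particles} together with Maxwell's equations, each of the five partial-variation integrals cancels exactly with the matching piece of \eqref{eq: stochastic Lagrange-d'Alembert principle for particles}.

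I expect the stochastic step to be the principal obstacle: the $\delta\mathbf{X}_a$ identity mixes a Stratonovich differential $d\mathbf{P}_a$ against a random test process $\delta\mathbf{X}_a$, and only Lemma~\ref{thm: Fundamental lemma of calculus of variations} allows one to pass from this integrated statement to a genuine SDE. A secondary technical point is that the admissible variations $\delta\mathbf{X}_a\in C^1_{\Omega,T}$ with vanishing endpoints form a sufficiently rich class for the lemma to apply (which is inherited from the definition of the function space), and the $C^1$ hypotheses on $\mathbf{G}$ and $\mathbf{g}_\nu$ ensure that $\mathbf{R}$ and $\mathbf{r}_\nu$ meet the regularity requirements of the lemma.
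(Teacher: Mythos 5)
Your proposal is correct and follows essentially the same route as the paper's proof: treat each variation separately, obtain Gauss's and the Amp\`ere--Maxwell laws from $\delta\varphi$ and $\delta\mathbf{A}$ and the Legendre/kinematic relations from $\delta\mathbf{V}_a$ and $\delta\mathbf{P}_a$ via standard (pathwise) calculus-of-variations arguments, and then apply Lemma~\ref{thm: Fundamental lemma of calculus of variations} to the $\delta\mathbf{X}_a$ identity (with your stated $\mathbf{R}$ and $\mathbf{r}_\nu$) to recover the momentum SDE, with the converse direction following from the same if-and-only-if statements.
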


\begin{proof}
Let us first consider the variations with respect to $\mathbf{A}$ in \eqref{eq: stochastic Lagrange-d'Alembert principle for particles}. Given the boundary conditions for $\delta \mathbf{A}$, from the standard calculus of variations we have that $\delta_{\mathbf{A}}S=0$ (see Equation~\eqref{eq: Variation of S wrt A - calculation}) for all $\delta \mathbf{A}$ if and only if \eqref{eq: Maxwell equations 4} is satisfied. Similarly, $\delta_{\varphi}S=0$ (see Equation~\eqref{eq: Variation of S wrt phi - calculation}) holds for all $\delta \varphi$ if and only if \eqref{eq: Maxwell equations 1} holds. Further, for variations with respect to $\mathbf{V}_a$ we have that $\delta_{\mathbf{V}_a}S=0$ (see Equation~\eqref{eq: Variation of S wrt V}) for all $\delta \mathbf{V}_a$ if and only if \eqref{eq: SDEs for the Lagrange-d'Alembert principle for particles 2} is satisfied almost surely, which follows from the standard theorem of the calculus of variations, since the integral in \eqref{eq: Variation of S wrt V} is a standard Lebesgue integral, and the integrands are almost surely continuous. Similarly, $\delta_{\mathbf{P}_a}S=0$ (see Equation~\eqref{eq: Variation of S wrt P}) for all $\delta \mathbf{P}_a$ if and only if \eqref{eq: SDEs for the Lagrange-d'Alembert principle for particles 1} is satisfied almost surely. Finally, for variations with respect to $\mathbf{X}_a$, Equations~\eqref{eq: Variation of S wrt X - calculation 2} and \eqref{eq: stochastic Lagrange-d'Alembert principle for particles} give

\begin{align}
\label{eq: Stochastic varational principle - variations wrt X}
\int_0^T \Bigg( -\delta &\mathbf{X}_a\circ d\mathbf{P}_a \nonumber \\
 &+\bigg( -q \nabla_x \varphi(\mathbf{X}_a,t)\cdot \delta\mathbf{X}_a+q\sum_{i,j=1}^3V^j\frac{\partial A^j}{\partial x^i}(\mathbf{X}_a,t)\delta X_a^i+m\mathbf{G}(\mathbf{X}_a,\mathbf{V}_a; f)\cdot \delta \mathbf{X}_a\bigg)\,dt \nonumber \\
 &\qquad\qquad\qquad\qquad\qquad\qquad\qquad\qquad\qquad\qquad+m\sum_{\nu=1}^M \mathbf{g}_\nu (\mathbf{X}_a,\mathbf{V}_a; f)\cdot \delta \mathbf{X}_a\circ dW_a^\nu(t) \Bigg) = 0,
\end{align}

\noindent
which, by Lemma~\ref{thm: Fundamental lemma of calculus of variations}, holds for all $\delta \mathbf{X}_a$ if and only if \eqref{eq: SDEs for the Lagrange-d'Alembert principle for particles 3} is satisfied.

\end{proof}

\paragraph{Remark.} Equation~\eqref{eq: SDEs for the Lagrange-d'Alembert principle for particles} is expressed in terms of the Lagrange multipliers $\mathbf{P}_a$, which, as can be seen in \eqref{eq: SDEs for the Lagrange-d'Alembert principle for particles 2}, turn out to be the conjugate momenta. The conjugate momenta can be eliminated, and Equation~\eqref{eq: SDEs for the Lagrange-d'Alembert principle for particles} can be recast as Equation~\eqref{eq: SDE for X_a and V_a 2}, which is shown in the following theorem.

\begin{thm}
\label{thm: Equivalence of the SDEs for X_a, V_a, P_a}
Equations~\eqref{eq: SDE for X_a and V_a} and \eqref{eq: SDEs for the Lagrange-d'Alembert principle for particles} are equivalent.
\end{thm}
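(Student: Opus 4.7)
The plan is to show that the Lagrange--d'Alembert equations \eqref{eq: SDEs for the Lagrange-d'Alembert principle for particles} can be rewritten as \eqref{eq: SDE for X_a and V_a} by eliminating the conjugate momentum $\mathbf{P}_a$, and that the transformation is reversible. First I would substitute the algebraic relation \eqref{eq: SDEs for the Lagrange-d'Alembert principle for particles 2}, namely $\mathbf{P}_a = m \mathbf{V}_a + q \mathbf{A}(\mathbf{X}_a,t)$, into the Stratonovich SDE \eqref{eq: SDEs for the Lagrange-d'Alembert principle for particles 3}. To do so I need to compute $d\mathbf{P}_a$ from this relation, which requires applying the Stratonovich chain rule to $\mathbf{A}(\mathbf{X}_a,t)$.

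Here the crucial structural observation (already pointed out after \eqref{eq: Space of stochastic processes}) is that $\mathbf{X}_a$ is almost surely of class $C^1$, hence has zero quadratic variation; consequently, the Stratonovich (and Itô) chain rule reduces to the ordinary one, giving
\begin{equation*}
dA^i(\mathbf{X}_a,t) = \frac{\partial A^i}{\partial t}(\mathbf{X}_a,t)\,dt + \sum_{j=1}^3 \frac{\partial A^i}{\partial x^j}(\mathbf{X}_a,t)\,V^j_a\,dt,
\end{equation*}
where I have also used \eqref{eq: SDEs for the Lagrange-d'Alembert principle for particles 1} to replace $\dot X^j_a$ by $V^j_a$. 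Equating the resulting expression for $dP^i_a$ with the right-hand side of \eqref{eq: SDEs for the Lagrange-d'Alembert principle for particles 3} and solving for $m\,dV^i_a$ yields
\begin{equation*}
m\,dV^i_a = \Bigl(-q\tfrac{\partial \varphi}{\partial x^i}(\mathbf{X}_a,t) - q\tfrac{\partial A^i}{\partial t}(\mathbf{X}_a,t) + q\sum_{j=1}^3 V^j_a\bigl(\tfrac{\partial A^j}{\partial x^i} - \tfrac{\partial A^i}{\partial x^j}\bigr)(\mathbf{X}_a,t) + m G^i\Bigr)\,dt + m\sum_{\nu=1}^M g^i_\nu \circ dW^\nu_a.
\end{equation*}

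The remaining step is to recognize the electromagnetic structure on the right. From the potential formulas \eqref{eq: Scalar and vector potentials} one reads off $E^i = -\partial \varphi/\partial x^i - \partial A^i/\partial t$, while the standard vector-calculus identity
\begin{equation*}
[\mathbf{V}_a \times (\nabla_x \times \mathbf{A})]^i = \sum_{j=1}^3 V^j_a\bigl(\tfrac{\partial A^j}{\partial x^i} - \tfrac{\partial A^i}{\partial x^j}\bigr),
\end{equation*}
obtained from the contraction $\epsilon^{ijk}\epsilon^{klm} = \delta^{il}\delta^{jm} - \delta^{im}\delta^{jl}$, identifies the velocity-dependent piece as $[\mathbf{V}_a \times \mathbf{B}]^i$. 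Dividing by $m$ then gives exactly \eqref{eq: SDE for X_a and V_a 2}, and \eqref{eq: SDEs for the Lagrange-d'Alembert principle for particles 1} coincides with \eqref{eq: SDE for X_a and V_a 1}. For the converse direction, given a solution of \eqref{eq: SDE for X_a and V_a}, I would simply \emph{define} $\mathbf{P}_a := m\mathbf{V}_a + q\mathbf{A}(\mathbf{X}_a,t)$, apply the Stratonovich chain rule (again trivial in the $\mathbf{X}_a$ variable since $\mathbf{X}_a$ is $C^1$), and run the computation backwards to recover \eqref{eq: SDEs for the Lagrange-d'Alembert principle for particles 3}.

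I do not anticipate any real obstacle: the argument is essentially a bookkeeping exercise. The only point requiring care is the justification that the Stratonovich differential of $\mathbf{A}(\mathbf{X}_a,t)$ has no correction terms, which relies on the finite-variation character of $\mathbf{X}_a$; this is already guaranteed by the function-space assumption $\mathbf{X}_a \in C^1_{\Omega,T}$. Once this is noted, the equivalence is just the classical Legendre-transform-plus-minimal-coupling computation that converts the canonical Lorentz-force Lagrangian into Hamiltonian form, performed pathwise with the stochastic forcing terms carried along unchanged.
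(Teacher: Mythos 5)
Your proposal is correct and follows essentially the same route as the paper's proof: computing the stochastic differential of the Legendre relation \eqref{eq: SDEs for the Lagrange-d'Alembert principle for particles 2} (with no It\^{o}/Stratonovich correction since $\mathbf{X}_a$ is almost surely $C^1$), substituting $\mathbf{\dot X}_a=\mathbf{V}_a$, comparing with \eqref{eq: SDEs for the Lagrange-d'Alembert principle for particles 3}, and using \eqref{eq: Scalar and vector potentials} together with the curl identity to recover the Lorentz force in \eqref{eq: SDE for X_a and V_a 2}. You merely spell out the vector-calculus bookkeeping and the converse direction more explicitly than the paper does.
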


\begin{proof}
By calculating the stochastic differential on both sides of \eqref{eq: SDEs for the Lagrange-d'Alembert principle for particles 2} and substituting \eqref{eq: SDEs for the Lagrange-d'Alembert principle for particles 1}, we obtain

\begin{equation}
\label{eq: Stochastic differential of P_a}
dP_a^i = m \,dV_a^i + q \sum_{j=1}^3 V_a^j \frac{\partial A^i}{\partial x^j}(\mathbf{X}_a,t)\,dt + q \frac{\partial A^i}{\partial t}(\mathbf{X}_a,t) \, dt
\end{equation}

\noindent
for each $i=1,2,3$ and $a=1,\ldots,N$. Comparing this with \eqref{eq: SDEs for the Lagrange-d'Alembert principle for particles 3}, and using \eqref{eq: Scalar and vector potentials}, one eliminates the conjugate momenta and obtains Equation~\eqref{eq: SDE for X_a and V_a 2}.

\end{proof}

\paragraph{Remark.} Theorems \ref{thm: Stochastic Lagrange-d'Alembert principle for particles} and \ref{thm: Equivalence of the SDEs for X_a, V_a, P_a} provide a variational formulation of the stochastic particle method from Section~\ref{sec:Particle discretization}. One can further perform a variational discretization of the electromagnetic fields $\mathbf{A}$ and $\varphi$, for instance along the lines of \cite{Squire2012}, \cite{SternDesbrun} or \cite{KrausGEMPIC}, thus obtaining a stochastic particle-in-cell (PIC) discretization of the collisional Vlasov-Maxwell equations. The resulting structure-preserving numerical methods will be investigated in a follow-up work.

\section{Variational principle for the Vlasov-Maxwell equations}
\label{sec: Variational principle for the Vlasov-Maxwell equation}

The form of the action functional \eqref{eq: Action functional for particles - formula} and of the Lagrange-d'Alembert principle \eqref{eq: stochastic Lagrange-d'Alembert principle for particles} suggests that it should be possible to formulate a similar variational principle for the stochastic reformulation of the Vlasov-Maxwell system discussed in Section~\ref{sec: Stochastic reformulation}. In this section we provide such a variational principle for a class of collision operators.

\subsection{Action functional}
\label{sec: Action functional for the Vlasov-Maxwell system}

Let us consider the action functional defined by the formula

\begin{equation}
\label{eq: Action functional - formula}
\bar S[\mathbf{X},\mathbf{V},\mathbf{P},\varphi,\mathbf{A}] = N_{tot}\cdot\mathbb{E}\Bigg[ \int_0^T \bigg( \frac{m}{2}|\mathbf{V}|^2-q \varphi(\mathbf{X},t)+q\mathbf{V}\cdot\mathbf{A}(\mathbf{X},t)+\mathbf{P}\cdot(\mathbf{\dot X}-\mathbf{V})\bigg)\,dt \Bigg]+\int_0^T\!\!\!\!\! \int_{\mathbb{R}^3}\frac{1}{2}(|\mathbf{E}|^2-|\mathbf{B}|^2)\,d^3\mathbf{x}dt,
\end{equation}

\noindent
where $\mathbf{\dot X}$ denotes the time derivative of $\mathbf{X}$, the electric and magnetic fields $\mathbf{E}$ and $\mathbf{B}$ are expressed in terms of the partial derivatives of the potentials $\varphi$ and $\mathbf{A}$ as in \eqref{eq: Scalar and vector potentials}, and $\mathbb{E}[Y] \equiv \int_\Omega Y \,d\mathbb{P}$ denotes the expected value of the random variable $Y$. Note that unlike $S$ in \eqref{eq: Action functional for particles - formula}, the action functional $\bar S$ is not a random variable, as the dependence on $\omega \in \Omega$ is integrated out with respect to the probability measure by calculating the expected value. In fact, $S$ could be regarded as a Monte Carlo approximation of $\bar S$ when the processes $\mathbf{X}_1, \ldots, \mathbf{X}_N$ are independent and identically distributed as $\mathbf{X}$, and similarly for $\mathbf{V}$ and $\mathbf{P}$. An important issue to consider is the domain of this action functional. In a fashion similar to \eqref{eq: Action functional for particles - domain}, one may want to take as the domain the set

\begin{equation}
\label{eq: Action functional - full domain}
C^1_{\Omega,T} \times C^0_{\Omega,T} \times C^0_{\Omega,T} \times \mathfrak{X}(\mathbb{R}) \times \mathfrak{X}(\mathbb{R}^3),
\end{equation}

\noindent
on which the formula \eqref{eq: Action functional - formula} is well-defined. This domain, however, turns out to be too big, in the sense that, as will be discussed below, due to the presence of the expected value the variations of $\bar S$ do not uniquely determine the set of stochastic evolution equations \eqref{eq: SDE for X and V}. It is therefore necessary to restrict \eqref{eq: Action functional - full domain} to a smaller subspace or submanifold which is compatible with the considered collision operator. Below we will demonstrate how this can be done for a class of collision operators \eqref{eq:Collision operator} for which $D_{ij}(\mathbf{x},\mathbf{v}; f) = \text{const}$, that is, we have

\begin{equation}
\label{eq: Subclass of collision operators}
\mathbf{g}_\nu(\mathbf{x},\mathbf{v}; f) = \pmb{\chi}_\nu= \text{const}.
\end{equation}

\noindent
This class encompasses, for instance, the Lenard-Bernstein operator \eqref{eq: Forcing terms for Lenard-Bernstein}, or the more general nonlinear energy and momentum preserving Dougherty collision operator and its modifications (see \cite{Clemmow1969Electrodynamics}, \cite{Dougherty1964}, \cite{DoughertyWatson1967}, \cite{FilbetSonnendrucker2003}, \cite{Hakim2020}, \cite{KrausPHD}, \cite{OngYu1969}, \cite{Oppenheim1965}). For a given collision operator of the form \eqref{eq: Subclass of collision operators}, we define a compatible subset of $C^0_{\Omega,T}$, namely,

\begin{equation}
\label{eq: Compatible subspace}
C_{\text{col}} = \bigg\{ \mathbf{P}\in C^0_{\Omega,T}  \, \Big| \, \exists \mathbf{Z} \in C^0_{\Omega,T}: d\mathbf{P}=\mathbf{Z}\,dt+m\sum_{\nu=1}^M\pmb{\chi}_\nu \, dW^\nu(t)\bigg\}.
\end{equation}

\noindent
Note that for any $\mathbf{P}_1, \mathbf{P}_2 \in C_{\text{col}}$ we have that $d(\mathbf{P}_1-\mathbf{P}_2) = (\mathbf{Z}_1-\mathbf{Z}_2)\,dt$, that is, $\mathbf{P}_1-\mathbf{P}_2 \in C^1_{\Omega,T}$. Therefore, the pair $(C_{\text{col}},C^1_{\Omega,T})$ is an affine subspace of $C^0_{\Omega,T}$. The action functional $\bar S$ can now be defined as

\begin{equation}
\label{eq: Action functional - restricted domain}
\bar S:C^1_{\Omega,T} \times C^0_{\Omega,T} \times C_{\text{col}} \times \mathfrak{X}(\mathbb{R}) \times \mathfrak{X}(\mathbb{R}^3) \longrightarrow \mathbb{R}.
\end{equation}

\noindent
Similar to the calculations in Section~\ref{sec: Action functional}, the variations of $\bar S$ with respect to $\mathbf{V}$ and $\mathbf{P}$ are given by, respectively,

\begin{align}
\label{eq: Variation of bar S wrt V}
\delta_{\mathbf{V}}\bar S &= N_{tot}\cdot\mathbb{E}\bigg[ \int_0^T \big( m\mathbf{V}+q\mathbf{A}(\mathbf{X},t)-\mathbf{P}\big)\cdot \delta\mathbf{V}\,dt \bigg], \\
\label{eq: Variation of bar S wrt P}
\delta_{\mathbf{P}}\bar S &= N_{tot}\cdot\mathbb{E}\bigg[ \int_0^T \big( \mathbf{\dot X} -\mathbf{V}\big)\cdot \delta\mathbf{P}\,dt \bigg],
\end{align}

\noindent
except that here $\delta\mathbf{P} \in C^1_{\Omega,T}$, so that $\mathbf{P}+\epsilon\delta\mathbf{P} \in C_{\text{col}}$. For the variation of $\bar S$ with respect to $\mathbf{X}$ we have

\begin{align}
\label{eq: Variation of bar S wrt X - calculation 1}
\delta_{\mathbf{X}}\bar S =&N_{tot}\cdot\mathbb{E}\bigg(\mathbf{P}(T) \cdot \delta \mathbf{X}(T)-\mathbf{P}(0) \cdot \delta \mathbf{X}(0) \bigg) \nonumber \\
&+N_{tot}\cdot\mathbb{E}\Bigg[ - \int_0^T \delta \mathbf{X}\circ d\mathbf{P} + \int_0^T \bigg( -q \nabla_x \varphi(\mathbf{X},t)\cdot \delta\mathbf{X}+q\sum_{i,j=1}^3V^j\frac{\partial A^j}{\partial x^i}(\mathbf{X},t)\delta X^i\bigg)\,dt \Bigg].
\end{align}

\noindent
Since $\mathbf{P}\in C_{\text{col}}$, we have that $d\mathbf{P}=\mathbf{Z}\,dt+m\sum_{\nu=1}^M\pmb{\chi}_\nu \, dW^\nu(t)$. Furthermore, the variations $\delta \mathbf{X}$ are almost surely of class $C^1$, and therefore have sample paths of almost surely finite variation. Consequently, the quadratic covariation $[\pmb{\chi}_\nu\cdot\delta \mathbf{X}, W^\nu]_0^T=0$ almost surely (see \cite{ProtterStochastic}). Since the expected value of the It\^{o} integral with respect to the Wiener process is zero, we altogether have that 

\begin{equation}
\label{eq: Expected value of the Stratonovich term}
\mathbb{E}\bigg[\int_0^T \pmb{\chi}_\nu\cdot\delta \mathbf{X} \circ dW^\nu(t)\bigg]=0, \qquad\qquad \text{for all $\nu=1,\ldots,M$.} 
\end{equation}

\noindent
By plugging this in \eqref{eq: Variation of bar S wrt X - calculation 1}, we finally obtain

\begin{align}
\label{eq: Variation of bar S wrt X}
\delta_{\mathbf{X}}\bar S =&N_{tot}\cdot\mathbb{E}\bigg(\mathbf{P}(T) \cdot \delta \mathbf{X}(T)-\mathbf{P}(0) \cdot \delta \mathbf{X}(0) \bigg) \nonumber \\
&+N_{tot}\cdot\mathbb{E}\Bigg[ \int_0^T  \bigg( -\mathbf{Z}\cdot\delta \mathbf{X} -q \nabla_x \varphi(\mathbf{X},t)\cdot \delta\mathbf{X}+q\sum_{i,j=1}^3V^j\frac{\partial A^j}{\partial x^i}(\mathbf{X},t)\delta X^i\bigg)\,dt \Bigg].
\end{align}

\noindent
The variations with respect to $\mathbf{A}$ and $\varphi$ are the same as in \eqref{eq: Variation of S wrt A - calculation} and \eqref{eq: Variation of S wrt phi - calculation}, respectively, only with the charge and electric current densities given by \eqref{eq: Charge and current densities in terms of expected values} rather than \eqref{eq: Charge and current densities from the law of large numbers}. The total variation of $\bar S$ with respect to the variations of all arguments is given by

\begin{equation}
\label{eq: Total variation of bar S}
\delta \bar S =\delta_{\mathbf{X}}\bar S+\delta_{\mathbf{V}}\bar S+\delta_{\mathbf{P}}\bar S+\delta_{\varphi}\bar S+\delta_{\mathbf{A}}\bar S. 
\end{equation}

\subsection{The stochastic Lagrange-d'Alembert principle}
\label{sec: The stochastic Lagrange-d'Alembert principle for Vlasov-Maxwell}

In the following theorems we establish a variational principle for the system of equations \eqref{eq: Maxwell equations}, \eqref{eq: SDE for X and V}, and \eqref{eq: Charge and current densities in terms of expected values} for a class of collision operators with $\mathbf{g}_\nu(\mathbf{x},\mathbf{v}; f) = \pmb{\chi}_\nu  = \text{const}$ for all $\nu=1,\ldots,M$.

\begin{thm}[{\bf Stochastic Lagrange-d'Alembert principle for the VM equations}]
\label{thm: Stochastic Lagrange-d'Alembert principle}
Let $\mathbf{X}\in C^1_{\Omega,T}$, $\mathbf{V}\in C^0_{\Omega,T}$, $\mathbf{P}\in C_{\text{col}}$ be stochastic processes, and let $\mathbf{A}\in \mathfrak{X}(\mathbb{R}^3)$, $\varphi\in \mathfrak{X}(\mathbb{R})$ be functions. Assume that $\mathbf{G}(\cdot, \cdot;f)$ is a $C^1$ function of its arguments, where $f$ is given by \eqref{eq: Charge and current densities in terms of expected values 1}. Then $\mathbf{X}$, $\mathbf{V}$, $\mathbf{P}$, $\mathbf{A}$, and $\varphi$ satisfy the system of stochastic differential equations

\begin{subequations}
\label{eq: SDEs for the Lagrange-d'Alembert principle}
\begin{align}
\label{eq: SDEs for the Lagrange-d'Alembert principle 1}
\mathbf{\dot X} &= \mathbf{V}, \\
\label{eq: SDEs for the Lagrange-d'Alembert principle 2}
\mathbf{P}      &= m\mathbf{V}+q\mathbf{A}(\mathbf{X},t), \\
\label{eq: SDEs for the Lagrange-d'Alembert principle 3}
 dP^i     &= \bigg(-q \frac{\partial \varphi}{\partial x^i}(\mathbf{X},t) + q\sum_{j=1}^3V^j\frac{\partial A^j}{\partial x^i}(\mathbf{X},t) + m \, G^i(\mathbf{X}, \mathbf{V};f)\bigg)\,dt +m\sum_{\nu=1}^M\chi^i_\nu \, dW^\nu(t),
\end{align}
\end{subequations}

\noindent
for $i=1,2,3$, together with the Maxwell equations \eqref{eq: Maxwell equations}, \eqref{eq: Scalar and vector potentials}, \eqref{eq: Charge and current densities in terms of expected values} on the time interval $[0,T]$, if and only if they satisfy the following variational principle

\begin{equation}
\label{eq: stochastic Lagrange-d'Alembert principle}
\delta \bar S + mN_{tot}\cdot\mathbb{E}\bigg[ \int_0^T \mathbf{G}(\mathbf{X},\mathbf{V}; f)\cdot \delta \mathbf{X} \, dt \bigg] = 0
\end{equation}

\noindent
for arbitrary variations $\delta\mathbf{X}, \delta\mathbf{P}\in C^1_{\Omega,T}$, $\delta\mathbf{V}\in C^0_{\Omega,T}$, $\delta \mathbf{A}\in \mathfrak{X}_0(\mathbb{R}^3)$, and $\delta\varphi\in \mathfrak{X}_0(\mathbb{R})$, with  $\delta\mathbf{X}(0)=\delta\mathbf{X}(T)=0$ almost surely, and $\delta \mathbf{A}(\mathbf{x},0)=\delta \mathbf{A}(\mathbf{x},T)=0$ for all $\mathbf{x} \in \mathbb{R}^3$, where the action functional $\bar S$ is given by~\eqref{eq: Action functional - formula} and \eqref{eq: Action functional - restricted domain}.
\end{thm}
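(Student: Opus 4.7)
The plan is to mirror the proof of Theorem~\ref{thm: Stochastic Lagrange-d'Alembert principle for particles}, with the two new ingredients being (i) the expected value in the action functional, and (ii) the restriction of $\mathbf{P}$ to the affine subspace $C_{\text{col}}$. First I would handle the deterministic variations $\delta\mathbf{A}\in\mathfrak{X}_0(\mathbb{R}^3)$ and $\delta\varphi\in\mathfrak{X}_0(\mathbb{R})$: since the expressions in \eqref{eq: Variation of S wrt A - calculation} and \eqref{eq: Variation of S wrt phi - calculation}, now using the densities \eqref{eq: Charge and current densities in terms of expected values 2} and \eqref{eq: Charge and current densities in terms of expected values 3}, pair stochastic integrands against deterministic test functions, the vanishing of $\delta_\mathbf{A}\bar S$ and $\delta_\varphi\bar S$ reduces to standard deterministic calculus of variations and yields \eqref{eq: Maxwell equations 1} and \eqref{eq: Maxwell equations 4}. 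Next, from \eqref{eq: Variation of bar S wrt V} and \eqref{eq: Variation of bar S wrt P}, arbitrariness of $\delta\mathbf{V}\in C^0_{\Omega,T}$ and $\delta\mathbf{P}\in C^1_{\Omega,T}$, together with a localization argument that uses deterministic time bumps multiplied by indicators $\mathbf{1}_A$ of $\mathcal{F}_0$-measurable events, upgrades the vanishing of the expected integrals to almost sure pointwise identities, producing \eqref{eq: SDEs for the Lagrange-d'Alembert principle 1} and \eqref{eq: SDEs for the Lagrange-d'Alembert principle 2}.

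The heart of the proof, and the main obstacle, is the variation with respect to $\mathbf{X}$. Since $\mathbf{P}\in C_{\text{col}}$, we have the decomposition $d\mathbf{P}=\mathbf{Z}\,dt+m\sum_{\nu=1}^M\pmb{\chi}_\nu\,dW^\nu$ for some $\mathbf{Z}\in C^0_{\Omega,T}$, and the constancy of $\pmb{\chi}_\nu$ is decisive: it is precisely what produces identity \eqref{eq: Expected value of the Stratonovich term}, killing the expectation of the Stratonovich noise term when pairing $\delta\mathbf{X}$ against $d\mathbf{P}$. Concretely, the almost sure $C^1$-regularity of $\delta\mathbf{X}$ makes the quadratic covariation $[\pmb{\chi}_\nu\cdot\delta\mathbf{X},W^\nu]$ vanish, so the Stratonovich and It\^o forms of the stochastic integral coincide, and the It\^o integral has zero expectation. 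Substituting this into \eqref{eq: Variation of bar S wrt X} and combining with the $\mathbf{G}$-forcing contribution from \eqref{eq: stochastic Lagrange-d'Alembert principle} reduces the variational condition to
\begin{equation*}
\mathbb{E}\bigg[\int_0^T \sum_{i=1}^3\bigg(-Z^i-q\frac{\partial\varphi}{\partial x^i}(\mathbf{X},t)+q\sum_{j=1}^3 V^j\frac{\partial A^j}{\partial x^i}(\mathbf{X},t)+m\,G^i(\mathbf{X},\mathbf{V};f)\bigg)\delta X^i\,dt\bigg]=0
\end{equation*}
for every admissible $\delta\mathbf{X}\in C^1_{\Omega,T}$ with vanishing endpoints. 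A further localization argument (deterministic bumps in time, multiplied by indicators of events in $\mathcal{F}_0$, and Fubini to interchange the time and probability integrals) yields the almost sure identity for $\mathbf{Z}$, and substituting this back into the decomposition of $d\mathbf{P}$ delivers \eqref{eq: SDEs for the Lagrange-d'Alembert principle 3}.

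The converse implication is obtained by reversing these steps: assuming \eqref{eq: SDEs for the Lagrange-d'Alembert principle} together with the Maxwell equations, each of $\delta_{\mathbf{V}}\bar S$, $\delta_{\mathbf{P}}\bar S$, $\delta_{\mathbf{A}}\bar S$, and $\delta_\varphi\bar S$ vanishes identically, while $\delta_{\mathbf{X}}\bar S$ combined with the $\mathbf{G}$-forcing term reduces to zero via \eqref{eq: Expected value of the Stratonovich term}. The subtlety in both directions is ensuring that the passage between an expectation-level identity and an almost sure identity is legitimate; this relies on the richness of the admissible variations and on the adaptedness structure inherited from the filtration $\{\mathcal{F}_t\}_{t\geq 0}$. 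No analogue of Lemma~\ref{thm: Fundamental lemma of calculus of variations} is required here, because the constancy of $\pmb{\chi}_\nu$ has already removed the stochastic-integral obstruction that Lemma~\ref{thm: Fundamental lemma of calculus of variations} was designed to handle in the particle case.
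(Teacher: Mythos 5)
Your overall route is the same as the paper's: deterministic variations in $\varphi$ and $\mathbf{A}$ give \eqref{eq: Maxwell equations 1} and \eqref{eq: Maxwell equations 4}; the decomposition $d\mathbf{P}=\mathbf{Z}\,dt+m\sum_\nu\pmb{\chi}_\nu\,dW^\nu$ from $\mathbf{P}\in C_{\text{col}}$ together with \eqref{eq: Expected value of the Stratonovich term} reduces the $\mathbf{X}$-variation to an expectation of a $dt$-integral; one then identifies $\mathbf{Z}$ with the right-hand side of \eqref{eq: Equation for Z in the variational principle} and recovers \eqref{eq: SDEs for the Lagrange-d'Alembert principle 3}; and no analogue of Lemma~\ref{thm: Fundamental lemma of calculus of variations} is needed. (Note that the reduction of $\delta_{\mathbf{X}}\bar S$ via the quadratic-covariation/It\^{o}-expectation argument is already carried out in the paper before the theorem, in \eqref{eq: Variation of bar S wrt X - calculation 1}--\eqref{eq: Variation of bar S wrt X}, so the proof proper only needs to cite it.)

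However, the specific mechanism you propose for upgrading the expectation-level identities to almost sure identities does not work as stated. Testing with deterministic time bumps multiplied by indicators $\mathbbm{1}_A$ of $\mathcal{F}_0$-measurable events only yields
\begin{equation*}
\mathbb{E}\big[\mathbbm{1}_A\, Y(t)\big]=0 \quad \text{for a.e. } t \text{ and all } A\in\mathcal{F}_0,
\end{equation*}
i.e., $\mathbb{E}[Y(t)\,|\,\mathcal{F}_0]=0$, which does not force $Y(t)=0$ almost surely, because the integrands here (e.g., $m\mathbf{V}+q\mathbf{A}(\mathbf{X},t)-\mathbf{P}$, or $-\mathbf{Z}-q\nabla_x\varphi+\dots$) are merely $\mathcal{F}_t$-adapted, not $\mathcal{F}_0$-measurable; a process like a Wiener increment independent of $\mathcal{F}_0$ is a nonzero counterexample orthogonal to all such test processes. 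The paper instead uses the fact that $\langle \mathbf{Y}_1,\mathbf{Y}_2\rangle=\mathbb{E}\big[\int_0^T \mathbf{Y}_1\cdot\mathbf{Y}_2\,dt\big]$ is an inner product on the relevant subspace of $L^2(\Omega\times[0,T],\mathbb{R}^3)$ and that the class of admissible \emph{adapted} variations is rich enough that orthogonality to all of them forces the (adapted, square-integrable) integrand to vanish -- in effect one may take the variation to be (a suitably regularized, endpoint-vanishing approximation of) the integrand itself. Your localization idea can be repaired by using indicators of events in $\mathcal{F}_s$ paired with bumps supported in $[s,T]$ (so the test process stays adapted) and then letting $s\uparrow t$, but as written, with $\mathcal{F}_0$ only, the step fails.
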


\begin{proof}
Similar to the proof of Theorem~\ref{thm: Stochastic Lagrange-d'Alembert principle for particles}, the equations $\delta_{\varphi}\bar S =0$ and $\delta_{\mathbf{A}}\bar S =0$ are equivalent to \eqref{eq: Maxwell equations 1} and \eqref{eq: Maxwell equations 4}, respectively. Note that  $C^0_{\Omega,T}$ is a subspace of $L^2(\Omega \times [0, T], \mathbb{R}^3)$, and $\langle \mathbf{Y}_1, \mathbf{Y}_2 \rangle = \mathbb{E}[\int_0^T \mathbf{Y}_1\cdot \mathbf{Y}_2\,dt]$ is an inner product on that space. Therefore, by substituting Equations \eqref{eq: Variation of bar S wrt V}, \eqref{eq: Variation of bar S wrt P}, and \eqref{eq: Variation of bar S wrt X} in Equation~\eqref{eq: stochastic Lagrange-d'Alembert principle}, and using the fact that the variations are arbitrary, we establish equivalence with Equations~\eqref{eq: SDEs for the Lagrange-d'Alembert principle 1}-\eqref{eq: SDEs for the Lagrange-d'Alembert principle 2}, as well as with the equation

\begin{equation}
\label{eq: Equation for Z in the variational principle}
Z^i = -q \frac{\partial \varphi}{\partial x^i}(\mathbf{X},t) + q\sum_{j=1}^3V^j\frac{\partial A^j}{\partial x^i}(\mathbf{X},t) + m \, G^i(\mathbf{X}, \mathbf{V};f),
\end{equation}

\noindent
for $i=1,2,3$, which in turn is equivalent to Equation~\eqref{eq: SDEs for the Lagrange-d'Alembert principle 3}, given the assumption $\mathbf{P}\in C_{\text{col}}$.\\
\end{proof}

\begin{thm}
\label{thm: Equivalence of the SDEs for X, V, P}
Equation~\eqref{eq: SDE for X and V} with $\mathbf{g}_\nu(\mathbf{x},\mathbf{v}; f) = \pmb{\chi}_\nu  = \text{const}$ for $\nu=1,\ldots,M$ and Equation~\eqref{eq: SDEs for the Lagrange-d'Alembert principle} are equivalent.
\end{thm}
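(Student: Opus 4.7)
The plan is to mirror the argument of Theorem~\ref{thm: Equivalence of the SDEs for X_a, V_a, P_a} and exploit the fact that, in \eqref{eq: SDEs for the Lagrange-d'Alembert principle 1}, the process $\mathbf{X}$ has almost surely $C^1$ sample paths (hence finite variation), so that Stratonovich calculus on $\mathbf{A}(\mathbf{X},t)$ reduces to the ordinary chain rule with no It\^o correction. The two directions of the equivalence should be handled as follows.

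For the forward direction, I would start from \eqref{eq: SDEs for the Lagrange-d'Alembert principle 2} and apply the stochastic differential to both sides, obtaining
\begin{equation*}
dP^i = m\,dV^i + q \frac{\partial A^i}{\partial t}(\mathbf{X},t)\,dt + q\sum_{j=1}^3 \frac{\partial A^i}{\partial x^j}(\mathbf{X},t)\,\dot{X}^j\,dt,
\end{equation*}
where the absence of Stratonovich/It\^o cross terms follows because $\mathbf{X}$ has finite variation. Substituting $\dot{X}^j = V^j$ from \eqref{eq: SDEs for the Lagrange-d'Alembert principle 1} and equating the right-hand side with the one in \eqref{eq: SDEs for the Lagrange-d'Alembert principle 3}, the $dP^i$ terms cancel and I can solve for $m\,dV^i$. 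Then I would invoke the vector identity $\sum_j V^j(\partial_i A^j - \partial_j A^i) = (\mathbf{V}\times(\nabla_x\times\mathbf{A}))^i$ together with the potential formulas \eqref{eq: Scalar and vector potentials} to identify the drift as $qE^i + q(\mathbf{V}\times\mathbf{B})^i + mG^i$ and the diffusion as $m\sum_\nu \chi^i_\nu\,dW^\nu$; since $\pmb{\chi}_\nu$ is constant, Stratonovich and It\^o agree and this matches \eqref{eq: SDE for X and V 2} with $\mathbf{g}_\nu = \pmb{\chi}_\nu$.

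For the reverse direction, given a solution of \eqref{eq: SDE for X and V} with $\mathbf{g}_\nu = \pmb{\chi}_\nu$, I would define $\mathbf{P} := m\mathbf{V}+q\mathbf{A}(\mathbf{X},t)$ so that \eqref{eq: SDEs for the Lagrange-d'Alembert principle 2} holds by construction, and \eqref{eq: SDEs for the Lagrange-d'Alembert principle 1} is just \eqref{eq: SDE for X and V 1}. Differentiating this definition and substituting \eqref{eq: SDE for X and V 2} yields \eqref{eq: SDEs for the Lagrange-d'Alembert principle 3} by the same algebraic manipulation in reverse; the expression for $d\mathbf{P}$ also shows $\mathbf{P}\in C_{\text{col}}$, since the diffusion coefficient is the constant $m\pmb{\chi}_\nu$ and the drift defines an admissible $\mathbf{Z}\in C^0_{\Omega,T}$.

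There is really no substantial obstacle here beyond bookkeeping: the only point that deserves explicit mention is the justification that $\mathbf{A}(\mathbf{X},t)$, viewed as a stochastic process, differentiates according to the ordinary chain rule because $\mathbf{X}$ is of finite variation, so that the stochastic differential of the product $q\mathbf{A}(\mathbf{X},t)$ produces no quadratic covariation term and the diffusion part of $d\mathbf{P}$ comes entirely from $m\,d\mathbf{V}$. This is precisely what makes the affine space $C_{\text{col}}$ a natural domain: its defining condition $d\mathbf{P} = \mathbf{Z}\,dt + m\sum_\nu \pmb{\chi}_\nu\,dW^\nu$ is exactly compatible with the structure of \eqref{eq: SDE for X and V 2} under the Legendre-type relation \eqref{eq: SDEs for the Lagrange-d'Alembert principle 2}.
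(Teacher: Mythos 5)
Your proposal is correct and follows essentially the same route as the paper: differentiate the Legendre-type relation \eqref{eq: SDEs for the Lagrange-d'Alembert principle 2} (using that $\mathbf{X}$ has a.s.\ $C^1$ paths, so no quadratic-covariation correction arises from $\mathbf{A}(\mathbf{X},t)$), substitute $\mathbf{\dot X}=\mathbf{V}$, compare with \eqref{eq: SDEs for the Lagrange-d'Alembert principle 3}, and use \eqref{eq: Scalar and vector potentials} to eliminate $\mathbf{P}$ and recover \eqref{eq: SDE for X and V 2}. Your explicit treatment of the reverse direction and the check that the constructed $\mathbf{P}$ lies in $C_{\text{col}}$ are points the paper leaves implicit, but the argument is the same.
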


\begin{proof}
Similar to the proof of Theorem~\ref{thm: Equivalence of the SDEs for X_a, V_a, P_a}, by calculating the stochastic differential on both sides of Equation~\eqref{eq: SDEs for the Lagrange-d'Alembert principle 2} and comparing with Equation~\eqref{eq: SDEs for the Lagrange-d'Alembert principle 3}, one eliminates $\mathbf{P}$ and obtains Equation~\eqref{eq: SDE for X and V 2}.\\
\end{proof}

\paragraph{Remark.} Note that the forcing terms $\mathbf{g}_\nu$ do not explicitly appear in the variational equation \eqref{eq: stochastic Lagrange-d'Alembert principle}. By comparing Theorem~\ref{thm: Stochastic Lagrange-d'Alembert principle for particles} and Theorem~\ref{thm: Stochastic Lagrange-d'Alembert principle}, one could intuitively expect that the relevant variational principle should read

\begin{equation}
\label{eq: Extension of the stochastic Lagrange-d'Alembert principle}
\delta \bar S + mN_{tot}\cdot\mathbb{E}\bigg[ \int_0^T \mathbf{G}(\mathbf{X},\mathbf{V}; f)\cdot \delta \mathbf{X} \, dt + \sum_{\nu=1}^M \int_0^T\mathbf{g}_\nu (\mathbf{X},\mathbf{V}; f)\cdot \delta \mathbf{X}\circ dW^\nu(t) \bigg] = 0.
\end{equation}

\noindent
However, due to the presence of the expected value in this equation, part or all of the information about the Stratonovich integral term is lost, as we saw in \eqref{eq: Expected value of the Stratonovich term} for instance. Therefore, if the domain \eqref{eq: Action functional - full domain} is chosen for $\bar S$, then the variational equations \eqref{eq: stochastic Lagrange-d'Alembert principle} or \eqref{eq: Extension of the stochastic Lagrange-d'Alembert principle} do not determine a unique set of stochastic differential equations that need to be satisfied by the considered stochastic processes. Consequently, it is necessary to encode the missing information about the forcing terms $\mathbf{g}_\nu$ in the definition of the action functional $\bar S$ by restricting its domain to a subset compatible with the considered collision operator. For the class of collision operators \eqref{eq: Subclass of collision operators} a suitable choice of the domain is proposed in \eqref{eq: Action functional - restricted domain}. For other collision operators appropriate domains will be nonlinear subspaces of \eqref{eq: Action functional - full domain}, and they will be investigated in a follow-up work.

\section{Variational principle for the Vlasov-Poisson equations}
\label{sec: Variational principle for the Vlasov-Poisson equations}

In the full Vlasov-Maxwell system the scalar $\varphi$ and vector $\mathbf{A}$ potentials are independent dynamic variables, and as such have to appear explicitly in the action functional alongside the stochastic processes $\mathbf{X}$, $\mathbf{V}$, and $\mathbf{P}$. In order to ensure the correct coupling between the stochastic processes and the electromagnetic field, an expected value was necessary in the definition of the action functional~\eqref{eq: Action functional - formula}. This created a difficulty in deriving a variational principle, as pointed out in the remark following Theorem~\ref{thm: Equivalence of the SDEs for X, V, P}. This difficulty can be circumvented for the Vlasov-Poisson equations because in this case the electrostatic potential $\varphi$ is uniquely determined by the stochastic process $\mathbf{X}$, as will be demonstrated below.

\subsection{The collisional Vlasov-Poisson equations}
\label{sec: The collisional Vlasov-Poisson equations}

The collisional Vlasov-Poisson equations

\begin{equation}
\label{eq: Collisional Vlasov-Poisson equation}
\frac{\partial f}{\partial t} + \mathbf{v}\cdot\nabla_x f + \frac{q}{m}\mathbf{E}\cdot \nabla_v f = C[f],
\end{equation}

\noindent
where

\begin{subequations}
\label{eq: Poisson's equation}
\begin{align}
\label{eq: Poisson's equation 1}
\mathbf{E}&=-\nabla_x \varphi, \\
\label{eq: Poisson's equation 2}
\Delta_x \varphi &= -\rho,
\end{align}
\end{subequations}

\noindent
and the charge density $\rho$ is given by \eqref{eq: Charge and current densities}, are an approximation of the Vlasov-Maxwell equations in the nonrelativistic zero-magnetic field limit. The associated stochastic differential equations take the form

\begin{subequations}
\label{eq: SDE for X and V for Vlasov-Poisson}
\begin{align}
\label{eq: SDE for X and V for Vlasov-Poisson 1}
d\mathbf{X} &= \mathbf{V} \, dt, \\
\label{eq: SDE for X and V for Vlasov-Poisson 2}
d\mathbf{V} &= \bigg(\frac{q}{m}\mathbf{E}(\mathbf{X},t)+\mathbf{G}(\mathbf{X},\mathbf{V}; f) \bigg) \, dt + \sum_{\nu=1}^M \mathbf{g}_\nu (\mathbf{X},\mathbf{V}; f)\circ dW^\nu(t).
\end{align}
\end{subequations}

\noindent
The equations \eqref{eq: Charge and current densities in terms of expected values 2}, \eqref{eq: Poisson's equation}, and \eqref{eq: SDE for X and V for Vlasov-Poisson} form a stochastic reformulation of the Vlasov-Poisson equations. A stochastic particle discretization and the corresponding stochastic variational principle can be derived just like in Sections~\ref{sec:Particle discretization} and \ref{sec: Variational principle}, respectively. Also, a variational principle analogous to the Lagrange-d'Alembert principle presented in Section~\ref{sec: Variational principle for the Vlasov-Maxwell equation} can be derived in a similar fashion. However, by doing so, one encounters the same difficulty with including the Stratonovich integral. In the case of the Vlasov-Poisson equations a different variational principle can be obtained by observing that the electrostatic potential $\varphi$ can be expressed as a functional of the stochastic process $\mathbf{X}$,

\begin{equation}
\label{eq: Electrostatic potential as a functional}
\varphi: \mathbb{R}^3 \times \mathbb{R} \times C^1_{\Omega,T} \longrightarrow \mathbb{R},
\end{equation}

\noindent
by solving Poisson's equation \eqref{eq: Poisson's equation 2}. Given the charge density function \eqref{eq: Charge and current densities in terms of expected values 2} and specific boundary conditions, the solution of Poisson's equation can be written using an appropriate Green's function for the Laplacian. Assuming the spatial domain is unbounded, the standard Green's function yields

\begin{equation}
\label{eq: Electrostatic potential - solution}
\varphi(\mathbf{x},t,\mathbf{X}) = \frac{1}{4 \pi} \int_{\mathbb{R}^3} \frac{\rho(\mathbf{y},t)}{|\mathbf{x}-\mathbf{y}|} d^3\mathbf{y} = \frac{q N_{tot}}{4 \pi}\mathbb{E}\bigg[\frac{1}{|\mathbf{x}-\mathbf{X}(t)|}\bigg].
\end{equation}

\noindent
From \eqref{eq: Poisson's equation 1} we have the electric field

\begin{equation}
\label{eq: Electric field - solution}
\mathbf{E}(\mathbf{x},t,\mathbf{X}) = \frac{q N_{tot}}{4 \pi}\mathbb{E}\bigg[\frac{\mathbf{x}-\mathbf{X}(t)}{|\mathbf{x}-\mathbf{X}(t)|^3}\bigg].
\end{equation}

\subsection{Action functional}
\label{eq: Action functional - Vlasov-Poisson}

Let us consider the action functional

\begin{equation}
\label{eq: Action functional for Vlasov-Poisson - domain}
\hat S: \Omega \times C^1_{\Omega,T} \times C^1_{\Omega,T} \times C^0_{\Omega,T}\times C^0_{\Omega,T} \longrightarrow \mathbb{R}
\end{equation}

\noindent
defined by the formula

\begin{equation}
\label{eq: Action functional for Vlasov-Poisson - formula}
\hat S[\mathbf{X},\mathbf{Y},\mathbf{V},\mathbf{P}] = \int_0^T \bigg( \frac{m}{2}|\mathbf{V}(t)|^2-q \varphi\big(\mathbf{X}(t),t,\mathbf{Y}\big)+\mathbf{P}(t)\cdot\big(\mathbf{\dot X}(t)-\mathbf{V}(t)\big)\bigg)\,dt,
\end{equation}

\noindent
where the electrostatic potential $\varphi$ is given by \eqref{eq: Electrostatic potential - solution}. Note that similar to $S$ in \eqref{eq: Action functional for particles - formula}, the functional $\hat S$ is itself random, and can be viewed as the action functional of particles represented by the process $\mathbf{X}$ which are moving in the electric field generated by particles represented by the process $\mathbf{Y}$. Similar to the calculations in Section~\ref{sec: Action functional}, the variations of $\hat S$ with respect to $\mathbf{X}$, $\mathbf{V}$, and $\mathbf{P}$ are given by, respectively,

\begin{subequations}
\begin{align}
\label{eq: Variation of hat S wrt X}
\delta_{\mathbf{X}}\hat S[\mathbf{X},\mathbf{Y},\mathbf{V},\mathbf{P}] &= \mathbf{P}(T) \cdot \delta \mathbf{X}(T)-\mathbf{P}(0) \cdot \delta \mathbf{X}(0) \nonumber \\
&\qquad\qquad -\int_0^T \delta \mathbf{X}(t) \circ d\mathbf{P}(t) +\int_0^T q \mathbf{E}\big(\mathbf{X}(t),t,\mathbf{Y}\big)\cdot \delta\mathbf{X}(t)\,dt, \\
\label{eq: Variation of hat S wrt V}
\delta_{\mathbf{V}}\hat S[\mathbf{X},\mathbf{Y},\mathbf{V},\mathbf{P}] &= \int_0^T \big( m\mathbf{V}(t)-\mathbf{P}(t)\big)\cdot \delta\mathbf{V}(t)\,dt, \\
\label{eq: Variation of hat S wrt P}
\delta_{\mathbf{P}}\hat S[\mathbf{X},\mathbf{Y},\mathbf{V},\mathbf{P}] &= \int_0^T \big( \mathbf{\dot X}(t) -\mathbf{V}(t)\big)\cdot \delta\mathbf{P}(t)\,dt,
\end{align}
\end{subequations}

\noindent
where the electric field $\mathbf{E}$ is given by \eqref{eq: Electric field - solution}. Note that we are not considering variations with respect to $\mathbf{Y}$. Let us for convenience define the joint variation of $\hat S$ with respect to $\mathbf{X}$, $\mathbf{V}$, and $\mathbf{P}$ as

\begin{equation}
\label{eq: Joint variation of hat S}
\delta_{(\mathbf{X},\mathbf{V},\mathbf{P})} \hat S = \delta_{\mathbf{X}}\hat S+\delta_{\mathbf{V}}\hat S+\delta_{\mathbf{P}}\hat S. 
\end{equation}

\subsection{The stochastic Lagrange-d'Alembert principle}
\label{sec: The stochastic Lagrange-d'Alembert principle for Vlasov-Poisson}

In the following theorem we formulate a variational principle for the system of equations \eqref{eq: Charge and current densities in terms of expected values 2}, \eqref{eq: Poisson's equation}, and \eqref{eq: SDE for X and V for Vlasov-Poisson}. Note that $\mathbf{E}(\mathbf{X}(t),t,\mathbf{X})$ is the electric field generated by a distribution of charged particles represented by the process $\mathbf{X}$ at time $t$, and evaluated at the random point $\mathbf{x}=\mathbf{X}(t)$ in space. Furthermore, the notation  $\delta_{\mathbf{X}}\hat S[\mathbf{X},\mathbf{X},\mathbf{V},\mathbf{P}]$ means that the variation of $\hat S$ is evaluated for the arguments $\mathbf{X},\mathbf{Y},\mathbf{V},\mathbf{P}$ with $\mathbf{Y}=\mathbf{X}$.

\begin{thm}[{\bf Stochastic Lagrange-d'Alembert principle for the VP equations}]
\label{thm: Stochastic Lagrange-d'Alembert principle for the VP equations}
Let $\mathbf{X}\in C^1_{\Omega,T}$ and $\mathbf{V}, \mathbf{P}\in C^0_{\Omega,T}$ be stochastic processes, and let $\varphi(\cdot,\cdot,\mathbf{X})\in \mathfrak{X}(\mathbb{R})$ be given by \eqref{eq: Electrostatic potential - solution}. Assume that $\mathbf{G}(\cdot, \cdot;f)$ and $\mathbf{g}_\nu(\cdot, \cdot;f)$ for $\nu=1,\ldots,M$ are $C^1$ functions of their arguments, where $f$ is given by \eqref{eq: Charge and current densities in terms of expected values 1}. Then $\mathbf{X}$, $\mathbf{V}$, and $\mathbf{P}$ satisfy the system of stochastic differential equations

\begin{subequations}
\label{eq: SDEs for the Lagrange-d'Alembert principle for Vlasov-Poisson}
\begin{align}
\label{eq: SDEs for the Lagrange-d'Alembert principle for Vlasov-Poisson 1}
\mathbf{\dot X}(t) &= \mathbf{V}(t), \\
\label{eq: SDEs for the Lagrange-d'Alembert principle for Vlasov-Poisson 2}
\mathbf{P}(t)      &= m\mathbf{V}(t), \\
\label{eq: SDEs for the Lagrange-d'Alembert principle for Vlasov-Poisson 3}
d\mathbf{P}(t)     &= \Big(q \mathbf{E}\big(\mathbf{X}(t),t,\mathbf{X}\big) + m \, \mathbf{G}\big(\mathbf{X}(t), \mathbf{V}(t);f\big) \Big)\,dt +  m\sum_{\nu=1}^M \mathbf{g}_\nu \big(\mathbf{X}(t),\mathbf{V}(t); f\big)\circ dW^\nu(t),
\end{align}
\end{subequations}

\noindent
on the time interval $[0,T]$, if and only if they satisfy the following variational principle

\begin{equation}
\label{eq: stochastic Lagrange-d'Alembert principle for Vlasov-Poisson}
\delta_{(\mathbf{X},\mathbf{V},\mathbf{P})} \hat S[\mathbf{X},\mathbf{X},\mathbf{V},\mathbf{P}] + m\int_0^T \mathbf{G}(\mathbf{X},\mathbf{V}; f)\cdot \delta \mathbf{X} \, dt + m\sum_{\nu=1}^M \int_0^T\mathbf{g}_\nu (\mathbf{X},\mathbf{V}; f)\cdot \delta \mathbf{X}\circ dW^\nu(t)  = 0
\end{equation}

\noindent
for arbitrary variations $\delta\mathbf{X}\in C^1_{\Omega,T}$, and $\delta\mathbf{V}, \delta\mathbf{P}\in C^0_{\Omega,T}$, with  $\delta\mathbf{X}(0)=\delta\mathbf{X}(T)=0$ almost surely, where the action functional $\hat S$ is given by \eqref{eq: Action functional for Vlasov-Poisson - formula}.
\end{thm}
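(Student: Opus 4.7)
The plan is to mirror the structure of the proof of Theorem~\ref{thm: Stochastic Lagrange-d'Alembert principle for particles}, exploiting the fact that $\hat S$ is a pathwise (random) action functional rather than an expected-value one. This sidesteps the difficulty flagged in the Remark following Theorem~\ref{thm: Equivalence of the SDEs for X, V, P}, because the identity \eqref{eq: stochastic Lagrange-d'Alembert principle for Vlasov-Poisson} is an almost-sure identity to which Lemma~\ref{thm: Fundamental lemma of calculus of variations} applies directly, with no loss of information about the Stratonovich term.

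First I would substitute the three variations \eqref{eq: Variation of hat S wrt X}--\eqref{eq: Variation of hat S wrt P}, evaluated at $\mathbf{Y}=\mathbf{X}$, into the variational equation \eqref{eq: stochastic Lagrange-d'Alembert principle for Vlasov-Poisson}. The variations with respect to $\mathbf{V}$ and $\mathbf{P}$ are ordinary pathwise Lebesgue integrals whose integrands are almost surely continuous, so the classical fundamental lemma of the calculus of variations immediately yields the algebraic conjugate-momentum relation \eqref{eq: SDEs for the Lagrange-d'Alembert principle for Vlasov-Poisson 2} and the kinematic constraint \eqref{eq: SDEs for the Lagrange-d'Alembert principle for Vlasov-Poisson 1} almost surely. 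For the $\mathbf{X}$-variation, the boundary terms in \eqref{eq: Variation of hat S wrt X} drop out thanks to $\delta\mathbf{X}(0)=\delta\mathbf{X}(T)=0$ a.s., and collecting all remaining contributions produces the almost-sure pathwise identity
\[
\int_0^T\!\Big( -\delta\mathbf{X}\circ d\mathbf{P} + \bigl[q\mathbf{E}(\mathbf{X},t,\mathbf{X}) + m\mathbf{G}(\mathbf{X},\mathbf{V};f)\bigr]\cdot\delta\mathbf{X}\,dt + m\sum_{\nu=1}^M\mathbf{g}_\nu(\mathbf{X},\mathbf{V};f)\cdot\delta\mathbf{X}\circ dW^\nu(t) \Big)=0
\]
for arbitrary $\delta\mathbf{X}\in C^1_{\Omega,T}$ with the stated boundary conditions. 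Applying Lemma~\ref{thm: Fundamental lemma of calculus of variations} with $\mathbf{R}(\mathbf{x},\mathbf{v})=q\mathbf{E}(\mathbf{x},t,\mathbf{X})+m\mathbf{G}(\mathbf{x},\mathbf{v};f)$ and $\mathbf{r}_\nu(\mathbf{x},\mathbf{v})=m\mathbf{g}_\nu(\mathbf{x},\mathbf{v};f)$ yields exactly \eqref{eq: SDEs for the Lagrange-d'Alembert principle for Vlasov-Poisson 3}. Since Lemma~\ref{thm: Fundamental lemma of calculus of variations} is an equivalence, the reverse implication is obtained by running the same substitutions backwards.

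The key conceptual point, and the reason the approach succeeds here whereas the naive analogue \eqref{eq: Extension of the stochastic Lagrange-d'Alembert principle} fails for Vlasov-Maxwell, is that the second slot $\mathbf{Y}$ of $\hat S$ is held fixed during the variation and only equated to $\mathbf{X}$ afterwards. Because Poisson's equation determines $\varphi(\cdot,t,\mathbf{Y})$ algebraically from the law of $\mathbf{Y}$, this splitting produces the correct self-consistent Lorentz force $q\mathbf{E}(\mathbf{X}(t),t,\mathbf{X})$ on the sample path without generating any extra functional-derivative contributions from differentiating the charge density with respect to $\mathbf{X}$.

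The main technical obstacle I anticipate is verifying that $\mathbf{E}(\cdot,t,\mathbf{X})$ has the regularity demanded by Lemma~\ref{thm: Fundamental lemma of calculus of variations} (namely $C^1$ in its spatial argument along the trajectory $\mathbf{x}=\mathbf{X}(t)$). Formally this follows by differentiating \eqref{eq: Electric field - solution} under the expectation, but the $|\mathbf{x}-\mathbf{X}(t)|^{-3}$ singularity of the Coulomb kernel means one needs suitable integrability/smoothness hypotheses on the distribution of $\mathbf{X}(t)$. A secondary technical point is the well-posedness of $\mathbf{G}(\cdot,\cdot;f)$ and $\mathbf{g}_\nu(\cdot,\cdot;f)$ as $C^1$ functions when $f$ is the empirical-type measure \eqref{eq: Charge and current densities in terms of expected values 1}; this has already been assumed in the statement of the theorem, so it is treated as a hypothesis rather than something to be proved.
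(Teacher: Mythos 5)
Your proposal is correct and follows essentially the same route as the paper, whose proof of Theorem~\ref{thm: Stochastic Lagrange-d'Alembert principle for the VP equations} is simply stated as analogous to that of Theorem~\ref{thm: Stochastic Lagrange-d'Alembert principle for particles}: substitute the variations \eqref{eq: Variation of hat S wrt X}--\eqref{eq: Variation of hat S wrt P} with $\mathbf{Y}=\mathbf{X}$, dispose of the $\mathbf{V}$- and $\mathbf{P}$-variations by the classical fundamental lemma, and apply Lemma~\ref{thm: Fundamental lemma of calculus of variations} to the $\mathbf{X}$-variation to obtain \eqref{eq: SDEs for the Lagrange-d'Alembert principle for Vlasov-Poisson 3}, with the equivalence running both ways. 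Your observation about holding the second slot $\mathbf{Y}$ fixed, and your regularity caveat on $\mathbf{E}$ (which the paper absorbs into the hypothesis $\varphi(\cdot,\cdot,\mathbf{X})\in \mathfrak{X}(\mathbb{R})$), match the paper's intent.
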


\begin{proof}
Analogous to the proof of Theorem~\ref{thm: Stochastic Lagrange-d'Alembert principle for particles}.\\
\end{proof}

\paragraph{Remark.} It is straightforward to see that Equations \eqref{eq: SDEs for the Lagrange-d'Alembert principle for Vlasov-Poisson}, together with \eqref{eq: Electrostatic potential - solution} and \eqref{eq: Electric field - solution}, are equivalent to the system of equations \eqref{eq: Charge and current densities in terms of expected values 2}, \eqref{eq: Poisson's equation}, and \eqref{eq: SDE for X and V for Vlasov-Poisson}. The Lagrange-d'Alembert principle \eqref{eq: stochastic Lagrange-d'Alembert principle for Vlasov-Poisson} is unusual in that the variations of the action functional $\hat S$ with respect to the argument $\mathbf{Y}$ are omitted. Thanks to such a form, however, the action functional does not require an expected value, and the collisional effects can be correctly included. A similar idea to solve Poisson's equation and plug the solution into the action functional was presented in \cite{YeMorrisonActionPrinciples}, where the authors proposed a variational principle for the collisionless Vlasov-Poisson equations. In that approach the energy of the electric field was also included in the variational principle, and the variations were taken with respect to all arguments of the action functional. This approach could be adapted to the stochastic reformulation of the Vlasov-Poisson equations, but the corresponding action functional would have a form similar to \eqref{eq: Action functional - formula}, that is, it would need to contain an expected value, and therefore we would face a similar difficulty as for the Vlasov-Maxwell equations in Section~\ref{sec: The stochastic Lagrange-d'Alembert principle for Vlasov-Maxwell}.

\section{Summary and future work}
\label{sec:Summary}

In this work we have considered novel stochastic formulations of the collisional Vlasov-Maxwell and Vlasov-Poisson equations, and we have identified new stochastic variational principles underlying these formulations. We have also proposed a stochastic particle method for the Vlasov-Maxwell equations and proved the corresponding stochastic variational principle.

Our work can be extended in several ways. The stochastic variational principle introduced in Section~\ref{sec: Variational principle} can be used to construct stochastic variational particle-in-cell numerical algorithms for the collisional Vlasov-Maxwell and Vlasov-Poisson equations. Variational integrators are an important class of geometric integrators. This type of numerical schemes is based on discrete variational principles and provides a natural framework for the discretization of Lagrangian systems, including forced, dissipative, or constrained ones. These methods have the advantage that they are symplectic when applied to systems without forcing, and in the presence of a symmetry, they satisfy a discrete version of Noether's theorem. For this reason they demonstrate superior performance in long-time simulations; see \cite{HallLeokSpectral}, \cite{JaySPARK}, \cite{KaneMarsden2000}, \cite{LeokShingel}, \cite{LeokZhang}, \cite{MarsdenWestVarInt}, \cite{OberBlobaum2016}, \cite{OberBlobaum2015}, \cite{RowleyMarsden}, \cite{TyranowskiDesbrunLinearLagrangians}, \cite{VankerschaverLeok}. Variational integrators were introduced in the context of finite-dimensional mechanical systems, but were later generalized to Lagrangian field theories (see \cite{MarsdenPatrickShkoller}) and applied in many computations, for example in elasticity, electrodynamics, fluid dynamics, or plasma physics; see \cite{KrausPHD}, \cite{LewAVI}, \cite{Pavlov}, \cite{Squire2012}, \cite{SternDesbrun}, \cite{TyranowskiDesbrunRAMVI}, \cite{XiaoLiuQin2013}, \cite{XiaoQinLiu2018}. Stochastic variational integrators were first introduced in \cite{BouRabeeSVI} and further studied in \cite{BouRabeeConstrainedSVI}, \cite{HolmTyranowskiSolitons}, \cite{HolmTyranowskiGalerkin}, \cite{KrausTyranowski2019}, \cite{WangPHD}.

 In Section~\ref{sec: Variational principle for the Vlasov-Maxwell equation} we have proposed a general action functional for the collisional Vlasov-Maxwell equations. However, we have also determined that in order to prove a relevant variational principle, the domain of this action functional has to be restricted in a way compatible with the collision operator of interest. We have shown that for a class of collision operators with constant diffusion terms, a suitable subdomain is an affine subspace (i.e., a submanifold). A natural continuation of our work would be to investigate submanifolds of \eqref{eq: Action functional - full domain} which are suitable for other collision operators.

Another aspect worth a more detailed investigation is the issue of existence and uniqueness of the solutions of the stochastic reformulations presented in this work, which are nontrivial systems of coupled stochastic and partial differential equations. This question is closely connected to the issue of existence and uniqueness of the solutions of the original collisional Vlasov-Maxwell and the Vlasov-Poisson equations. General results are available in the collisionless case (see, e.g., \cite{DegondNeunzert1986}, \cite{Wollman1984}, \cite{Wollman1987}), but the theory for the collisional equations is less developed (see \cite{Degond1986}, \cite{DuanStrain2011}, \cite{NeunzertVlasovFokkerPlanck1984}, \cite{Ono2001}, \cite{Strain2006}, \cite{WangVlasovMaxwell2019}, \cite{WangFokkerPlanck2020} and the references therein).

Furthermore, our stochastic Lagrange-d'Alembert approach could also be adapted to relativistic plasmas (see \cite{BraamsKarney1989}), and to variational principles with phase-space Lagrangians appearing in gyrokinetic (\cite{BottinoSonnendrucker2015}, \cite{Brizard2000}, \cite{SugamaGyrokinetic2000}) and guiding-center theories (\cite{BrizardTronci2016}, \cite{CaryBrizard2009}, \cite{Pfirsch1984}, \cite{PfirschMorrison1985}). In particular, considering stochastic extensions of the variational principles proposed in \cite{BrizardTronci2016} could offer an alternative stochastic description of anomalous transport in magnetically confined plasmas (see \cite{Balescu1994}, \cite{Eijnden1998}).

Finally, as is typical for particle methods in general, the stochastic particle discretization proposed in Section~\ref{sec:Particle discretization} will require a large number of particles for accurate numerical simulations, which is computationally expensive. Structure-preserving model reduction methods (see \cite{AfkhamHesthaven2017}, \cite{PengMohseni2016}) have been recently successfully applied to particle discretizations of the collisionless Vlasov equation (see \cite{TyranowskiKraus2021}). It would be of great practical interest to combine our results with model reduction techniques in order to develop new efficient structure-preserving data-driven numerical methods for the collisional Vlasov-Maxwell equations.

\section*{Acknowledgements}

We would like to thank Christopher Albert, Darryl Holm, Michael Kraus, Omar Maj, Houman Owhadi, Eric Sonnendr\"{u}cker, and Cesare Tronci for useful comments and references. The study is a contribution to the Reduced Complexity Models grant number ZT-I-0010 funded by the Helmholtz Association of German Research Centers.


\appendix

\section{The variations of the action functional $S$}
\label{sec: The variations of the action functional S}
We will define the variation of $S$ with respect to the variation $\delta \mathbf{X}_a \in C^1_{\Omega,T}$ of the argument $\mathbf{X}_a$ as

\begin{equation}
\label{eq: Variation of S wrt X - definition}
\delta_{\mathbf{X}_a}S = \frac{d}{d\epsilon}\bigg|_{\epsilon=0}S[\mathbf{X}_1,\ldots,\mathbf{X}_a+\epsilon \delta\mathbf{X}_a,\ldots,\mathbf{X}_N,\mathbf{V}_1,\ldots,\mathbf{V}_N, \mathbf{P}_1,\ldots,\mathbf{P}_N, \varphi, \mathbf{A}].
\end{equation}

\noindent
Since the potentials $\varphi$ and $\mathbf{A}$ are $C^2$, and the processes $\mathbf{X}_b$, $\mathbf{V}_b$, and $\mathbf{P}_b$ are almost surely continuous, we can use a dominated convergence argument to interchange the differentiation with respect to $\epsilon$ and integration with respect to $t$ to obtain

\begin{equation}
\label{eq: Variation of S wrt X - calculation 1}
\delta_{\mathbf{X}_a}S = \frac{N_{tot}}{N} \int_0^T \bigg( -q \nabla_x \varphi(\mathbf{X}_a,t)\cdot \delta\mathbf{X}_a+q\sum_{i,j=1}^3V^j\frac{\partial A^j}{\partial x^i}(\mathbf{X}_a,t)\delta X_a^i+\mathbf{P}_a\cdot\delta \mathbf{\dot X}_a\bigg)\,dt.
\end{equation}

\noindent
Since $\delta\mathbf{X}_a$ is almost surely differentiable, we have that its stochastic differential is simply $d\delta\mathbf{X}_a=\delta \mathbf{\dot X}_a\,dt$. Furthermore, both $\delta \mathbf{X}_a$ and $\mathbf{P}_a$ are almost surely continuous semimartingales, therefore using the integration by parts formula for semimartingales (see \cite{ProtterStochastic}) we can write

\begin{equation}
\label{eq: Integration by parts}
\int_0^T \mathbf{P}_a\cdot\delta \mathbf{\dot X}_a\,dt = \int_0^T \mathbf{P}_a \circ d\delta \mathbf{X}_a=\mathbf{P}_a(t) \cdot \delta \mathbf{X}_a(t) \Big|_0^T - \int_0^T \delta \mathbf{X}_a\circ d\mathbf{P}_a,
\end{equation}

\noindent
where the Stratonovich integrals are understood in the sense that $\int \delta \mathbf{X}_a\circ d\mathbf{P}_a = \sum_i \int \delta X^i_a\circ dP^i_a$. By substituting \eqref{eq: Integration by parts} in \eqref{eq: Variation of S wrt X - calculation 1}, we obtain \eqref{eq: Variation of S wrt X - calculation 2}. Variations with respect to $\delta \mathbf{V}_a, \delta \mathbf{P}_a \in C^0_{\Omega,T}$ are defined analogously to \eqref{eq: Variation of S wrt X - definition}. Similar computations (note that integration by parts is not necessary) yield \eqref{eq: Variation of S wrt V} and \eqref{eq: Variation of S wrt P}, respectively.

The variation of $S$ with respect to the variation $\delta \mathbf{A} \in \mathfrak{X}_0(\mathbb{R}^3)$ of the vector potential $\mathbf{A}$ is defined as

\begin{equation}
\label{eq: Variation of S wrt A - definition}
\delta_{\mathbf{A}}S = \frac{d}{d\epsilon}\bigg|_{\epsilon=0}S[\mathbf{X}_1,\ldots,\mathbf{X}_N,\mathbf{V}_1,\ldots,\mathbf{V}_N, \mathbf{P}_1,\ldots,\mathbf{P}_N, \varphi, \mathbf{A}+\epsilon \delta\mathbf{A}].
\end{equation}

\noindent
Switching the order of differentiation and integration, integrating by parts, and using the fact that $\delta \mathbf{A}$ is compactly supported, one arrives at \eqref{eq: Variation of S wrt A - calculation}, where in the derivations we have used \eqref{eq: Charge and current densities from the law of large numbers 3} and

\begin{align}
\label{eq: Integral of the current density}
\frac{N_{tot}}{N}\sum_{b=1}^N\bigg[\int_0^T q\mathbf{V}_b(t)\cdot\delta\mathbf{A}(\mathbf{X}_b,t)\,dt\bigg] &= \int_0^T\!\!\!\!\! \int_{\mathbb{R}^3} \frac{qN_{tot}}{N}\sum_{b=1}^N\big[q\mathbf{V}_b(t)\delta(\mathbf{x}-\mathbf{X}_b(t))\big]\cdot\delta\mathbf{A}(\mathbf{x},t)\,d^3\mathbf{x}dt \nonumber\\
&=\int_0^T\!\!\!\!\! \int_{\mathbb{R}^3}\mathbf{J}(\mathbf{x},t)\cdot\delta\mathbf{A}(\mathbf{x},t)\,d^3\mathbf{x}dt,
\end{align}

\noindent
and the remaining calculations are standard, and can be found in, e.g., \cite{Evans}, \cite{JacksonElectrodynamics}. The variation of $S$ with respect to the variation $\delta \varphi \in \mathfrak{X}_0(\mathbb{R})$ of the scalar potential $\varphi$ is defined in a similar fashion, and after similar calculations one obtains \eqref{eq: Variation of S wrt phi - calculation}.

\section{Proof of Lemma~\ref{thm: Fundamental lemma of calculus of variations}}
\label{sec: Proof of Lemma}

\begin{proof}
Suppose that \eqref{eq: Lemma2 equation 2} holds. Then \eqref{eq: Lemma2 equation 1} follows from the associativity property of the Stratonovich integral (see, e.g., the proof of Theorem~2.1 in \cite{HolmTyranowskiGalerkin}). Conversely, assume that \eqref{eq: Lemma2 equation 1} is satisfied, and let us prove that \eqref{eq: Lemma2 equation 2} follows. Our reasoning very closely follows the proof of Theorem~3.3 in \cite{BouRabeeSVI}. Pick any time $t\in[0,T]$. We will use $\mathbf{e}_1$, $\mathbf{e}_2$, and $\mathbf{e}_3$ to denote the standard Cartesian basis vectors for $\mathbb{R}^3$. Pick a basis vector $\mathbf{e}_i$. The condition \eqref{eq: Lemma2 equation 1} in particular holds for $\mathbf{Z}$'s which are $C^1$ functions of time, i.e., non-random. The main idea of the proof is to construct a one-parameter family of $C^1$ functions $\mathbf{Z}_\epsilon$ which converge to $\mathbbm{1}_{[0,t]}\mathbf{e}_i$ as $\epsilon \longrightarrow 0$, and show that the integral in \eqref{eq: Lemma2 equation 1} converges almost surely to the integral in \eqref{eq: Lemma2 equation 2}. Let us introduce the notation

\begin{align}
\label{eq: Definition of I}
I(\mathbf{X}, \mathbf{V}, \mathbf{P}, \mathbf{Z}) &= \int_0^T \Big( \mathbf{Z}(\tau)\circ d\mathbf{P}(\tau)-\mathbf{R}(\mathbf{X},\mathbf{V})\cdot\mathbf{Z}(\tau)\,d\tau - \sum_{\nu=1}^M\mathbf{r}_\nu(\mathbf{X},\mathbf{V})\cdot\mathbf{Z}(\tau)\circ dW^\nu(\tau) \Big),\\
\label{eq: Definition of I*}
I^*(\mathbf{X}, \mathbf{V}, \mathbf{P})&= \int_0^T \Big( \mathbbm{1}_{[0,t]}\mathbf{e}_i\circ d\mathbf{P}(\tau)-\mathbf{R}(\mathbf{X},\mathbf{V})\cdot\mathbbm{1}_{[0,t]}\mathbf{e}_i\,d\tau - \sum_{\nu=1}^M\mathbf{r}_\nu(\mathbf{X},\mathbf{V})\cdot\mathbbm{1}_{[0,t]}\mathbf{e}_i\circ dW^\nu(\tau) \Big) \nonumber \\
                                       &=  \int_0^t \Big(dP^i(\tau)-R^i(\mathbf{X}(\tau),\mathbf{V}(\tau))\,d\tau - \sum_{\nu=1}^M r^i_\nu(\mathbf{X}(\tau),\mathbf{V}(\tau))\circ dW^\nu(\tau) \Big).
\end{align}

\noindent
Define the functions $h_1:[0,\epsilon]\longrightarrow [0,1]$ and $h_2:[t-\epsilon,t]\longrightarrow [0,1]$ by the formulas

\begin{equation}
\label{eq: Definition of the functions h}
h_1(\tau) = 2 \frac{\tau}{\epsilon}-\frac{\tau^2}{\epsilon^2}, \qquad\qquad h_2(\tau)= 
\begin{cases}
-\frac{2}{\epsilon^2}(\tau-t+\epsilon)^2+1                              & \text{if $t-\epsilon \leq \tau \leq t-\frac{\epsilon}{2}$,} \\
\frac{2}{\epsilon^2}(\tau-t+\epsilon)^2 - \frac{4}{\epsilon} (\tau-t+\epsilon) + 2   & \text{if $t-\frac{\epsilon}{2} < \tau \leq t$.}
\end{cases}
\end{equation}

\noindent
Note that $h_1(0)=h_2(t)=0$, $h_1(\epsilon)=h_2(t-\epsilon)=1$, and $h'_1(\epsilon)=h'_2(t-\epsilon)=h'_2(t)=0$. Define further the family of functions $\mathbf{Z}_\epsilon$ by the formula

\begin{equation}
\label{eq: Sequence of differentiable Z functions}
\mathbf{Z}_\epsilon(\tau) =
  \begin{cases} 
      h_1(\tau) \mathbf{e}_i      & \text{if $0 \leq \tau \leq \epsilon$,} \\
      \mathbf{e}_i              & \text{if $\epsilon < \tau < t-\epsilon$,} \\
      h_2(\tau) \mathbf{e}_i    & \text{if $t-\epsilon \leq \tau \leq t$,} \\
			0                         & \text{if $t < \tau \leq T$.}
   \end{cases}
\end{equation} 

\noindent
It is easy to see that $\mathbf{Z}_\epsilon$ is continuously differentiable\footnote{Note that our definition \eqref{eq: Sequence of differentiable Z functions} is slightly different from the corresponding definition in \cite{BouRabeeSVI}, because the test functions used in \cite{BouRabeeSVI} are in fact not differentiable at $\tau=t$. This, however, is of little consequence for the rest of the proof.} on $[0,T]$, and converges to $\mathbbm{1}_{[0,t]}\mathbf{e}_i$ in the $L^2$ norm as $\epsilon$ goes to zero. Using \eqref{eq: Definition of I}, \eqref{eq: Definition of I*}, \eqref{eq: Definition of the functions h}, and \eqref{eq: Sequence of differentiable Z functions}, we have

\begin{align}
\label{eq: Calculating I*-I}
I^*(\mathbf{X}, &\mathbf{V},\mathbf{P})-I(\mathbf{X}, \mathbf{V}, \mathbf{P}, \mathbf{Z}_\epsilon)= \nonumber \\
&\int_0^\epsilon \bigg( (1-h_1(\tau))\circ dP^i(\tau) - (1-h_1(\tau))R^i(\mathbf{X}, \mathbf{V})\,d\tau-\sum_{\nu=1}^M (1-h_1(\tau))r^i_\nu(\mathbf{X}, \mathbf{V})\circ dW^\nu \bigg) \nonumber \\
+&\int_{t-\epsilon}^t \bigg( (1-h_2(\tau))\circ dP^i(\tau) - (1-h_2(\tau))R^i(\mathbf{X}, \mathbf{V})\,d\tau -\sum_{\nu=1}^M (1-h_2(\tau))r^i_\nu(\mathbf{X}, \mathbf{V})\circ dW^\nu \bigg).
\end{align}

\noindent
By definition, the Stratonovich integrals in \eqref{eq: Calculating I*-I} can be expressed in terms of the It\^{o} integrals as

\begin{align}
\label{eq: Stratonovich integrals in terms of Ito integrals}
\int_0^\epsilon (1-h_1(\tau))r^i_\nu(\mathbf{X}, \mathbf{V})\circ dW^\nu &= \int_0^\epsilon (1-h_1(\tau))r^i_\nu(\mathbf{X}, \mathbf{V})\, dW^\nu + \frac{1}{2}\Big[(1-h_1(\tau))r^i_\nu(\mathbf{X}, \mathbf{V}),W^\nu(\tau)\Big]^\epsilon_0, \nonumber \\
\int_{t-\epsilon}^t(1-h_2(\tau))r^i_\nu(\mathbf{X}, \mathbf{V})\circ dW^\nu &= \int_{t-\epsilon}^t (1-h_2(\tau))r^i_\nu(\mathbf{X}, \mathbf{V})\, dW^\nu + \frac{1}{2}\Big[(1-h_2(\tau))r^i_\nu(\mathbf{X}, \mathbf{V}),W^\nu(\tau)\Big]_{t-\epsilon}^t,
\end{align}

\noindent
for each $\nu=1,\ldots,M$, where $[\cdot,\cdot]$ denotes the quadratic covariation process. Since the quadratic covariation of almost surely continuous semimartingales is itself a semimartingale with almost surely continuous paths (see Theorem~23 in Chapter~II.6 of \cite{ProtterStochastic}), we have that

\begin{equation}
\label{eq: Quadratic covariation limit 1}
\Big[(1-h_1(\tau))r^i_\nu(\mathbf{X}(\tau), \mathbf{V}(\tau)),W^\nu(\tau)\Big]^\epsilon_0 \longrightarrow (1-h_1(0))r^i_\nu(\mathbf{X}(0), \mathbf{V}(0))W^\nu(0) = 0 \qquad \text{a.s. as $\epsilon \longrightarrow 0$},
\end{equation}

\noindent
since $W^\nu(0) = 0$ almost surely. In a similar fashion we show 

\begin{equation}
\label{eq: Quadratic covariation limit 2}
\Big[(1-h_2(\tau))r^i_\nu(\mathbf{X}(\tau), \mathbf{V}(\tau)),W^\nu(\tau)\Big]_{t-\epsilon}^t \longrightarrow 0 \qquad \text{a.s. as $\epsilon \longrightarrow 0$}.
\end{equation}

\noindent
Using \eqref{eq: Calculating I*-I} and \eqref{eq: Stratonovich integrals in terms of Ito integrals}, we have the estimate

\begin{align}
\label{eq: Estimating I*-I}
|I^*(\mathbf{X}, &\mathbf{V},\mathbf{P})-I(\mathbf{X}, \mathbf{V}, \mathbf{P}, \mathbf{Z}_\epsilon)| \leq \nonumber \\
&\underbrace{\bigg|\int_0^\epsilon \bigg( (1-h_1(\tau))\circ dP^i(\tau) - (1-h_1(\tau))R^i(\mathbf{X}, \mathbf{V})\,d\tau-\sum_{\nu=1}^M (1-h_1(\tau))r^i_\nu(\mathbf{X}, \mathbf{V})\,dW^\nu \bigg) \bigg| }_{\Gamma_1}\nonumber \\
+&\underbrace{\bigg|\int_{t-\epsilon}^t \bigg( (1-h_2(\tau))\circ dP^i(\tau) - (1-h_2(\tau))R^i(\mathbf{X}, \mathbf{V})\,d\tau -\sum_{\nu=1}^M (1-h_2(\tau))r^i_\nu(\mathbf{X}, \mathbf{V})\, dW^\nu \bigg) \bigg| }_{\Gamma_2}\nonumber \\
+&\frac{1}{2}\sum_{\nu=1}^M \bigg| \Big[(1-h_1(\tau))r^i_\nu(\mathbf{X}, \mathbf{V}),W^\nu(\tau)\Big]^\epsilon_0 \bigg| + \frac{1}{2} \sum_{\nu=1}^M \bigg|\Big[(1-h_2(\tau))r^i_\nu(\mathbf{X}, \mathbf{V}),W^\nu(\tau)\Big]_{t-\epsilon}^t \bigg|.
\end{align}

\noindent
By bounding the integrands and using the It\^{o} isometry theorem, it is shown in \cite{BouRabeeSVI} that $\Gamma_1 \longrightarrow 0$ and $\Gamma_2 \longrightarrow 0$ in mean-square as $\epsilon \longrightarrow 0$, and consequently, by invoking the Borel-Cantelli lemma, there exists a subsequence $(\epsilon_n)$ such that $\epsilon_n \longrightarrow 0$ as $n\longrightarrow \infty$, for which $\Gamma_1 \longrightarrow 0$ and $\Gamma_2 \longrightarrow 0$ almost surely. Together with \eqref{eq: Quadratic covariation limit 1} and \eqref{eq: Quadratic covariation limit 2}, this means that $I(\mathbf{X}, \mathbf{V}, \mathbf{P}, \mathbf{Z}_{\epsilon_n}) \longrightarrow I^*(\mathbf{X}, \mathbf{V},\mathbf{P})$ almost surely. Given the assumption \eqref{eq: Lemma2 equation 1}, we have that $I^*(\mathbf{X}, \mathbf{V},\mathbf{P}) = 0$ almost surely, which completes the proof.

\end{proof}


\end{document}